\documentclass[11pt]{article}

\usepackage[margin=1in]{geometry}
\usepackage{amsmath,amssymb,amsthm,mathtools}
\usepackage{bm}
\usepackage{booktabs}
\usepackage{enumitem}
\usepackage{graphicx}
\usepackage{array}
\usepackage{placeins}
\usepackage[authoryear,round]{natbib}
\usepackage{setspace}
\usepackage{microtype}
\usepackage{xurl}
\usepackage[colorlinks=true,linkcolor=blue,citecolor=blue,urlcolor=blue]{hyperref}

\newtheorem{assumption}{Assumption}
\newtheorem{theorem}{Theorem}
\newtheorem{proposition}{Proposition}
\newtheorem{corollary}{Corollary}

\newtheorem{lemma}{Lemma}

\newcommand{\E}{\mathbb{E}}
\newcommand{\Pp}{\mathbb{P}}
\newcommand{\R}{\mathbb{R}}
\newcommand{\Var}{\operatorname{Var}}
\newcommand{\Cov}{\operatorname{Cov}}

\newcommand{\argmin}{\operatorname*{arg\,min}}
\newcommand{\norm}[1]{\left\lVert #1 \right\rVert}
\newcommand{\ip}[2]{\left\langle #1,#2\right\rangle}
\newcommand{\Pn}{\mathbb{P}_n}

\newcommand{\ind}{\mathbf 1}

\title{\textbf{Information Borrowing for Cox Regression with an Auxiliary Outcome}}

\author{
Xiang Wang\thanks{School of Mathematics and Statistics, Shanxi University, Taiyuan, Shanxi, China.}
\and
Kexuan Li\thanks{Bristol Myers Squibb, Cambridge, Massachusetts, USA.}
\and
Lingli Yang\thanks{Takeda Pharmaceuticals, Cambridge, Massachusetts, USA.}
\and
Weidong Ma\thanks{Department of Biostatistics, University of Pennsylvania, Philadelphia, Pennsylvania, USA.}
\thanks{Corresponding author: \texttt{weidong.ma@pennmedicine.upenn.edu}}
}

\date{}

\begin{document}
\maketitle
\begin{abstract}
Time-to-event outcomes are often collected together with auxiliary outcomes that may provide additional information about baseline risk. We develop a semiparametric approach for incorporating such information into Cox regression without specifying a joint likelihood. The survival outcome follows a Cox proportional hazards model and a continuous auxiliary outcome follows a partially linear model, with separate nonparametric baseline covariate effects linked through a quadratic penalty. When these effects coincide and the outcome scores satisfy the information identities and a first-order orthogonality condition, we derive the sandwich covariance of the penalized estimator and show that, for any fixed penalty level, the asymptotic variance of the Cox regression estimator is no greater than that under separate estimation. We further characterize departures from the shared-effect setting. Local differences of order \(n^{-1/2}\) induce an explicit mean shift in the limiting distribution, yielding a direct bias--variance trade-off, whereas fixed differences generally alter the population target under nonvanishing penalization. These results motivate an adaptive penalty that borrows information when the fitted covariate effects are close and approaches separate estimation when a persistent difference is detected. Simulations and a real world data analysis demonstrate the validity
and effectiveness of the proposed method.
\end{abstract}

\noindent\textbf{Keywords:}
auxiliary outcome; information borrowing; penalized estimation; semiparametric inference;

\clearpage
\section{Introduction}
\label{sec:intro}

Survival outcomes are often collected together with other outcomes measured on the same individuals. In a clinical trial, for example, participants may have laboratory measurements, clinical scores, imaging summaries, or short-term responses in addition to the primary time-to-event outcome. Although these additional outcomes are not themselves the primary endpoint, they may contain useful information about baseline prognosis. This raises a natural question: can an auxiliary outcome improve inference for a regression parameter in a survival model by improving estimation of the baseline covariate effects used for adjustment?

Several related problems have been studied in the literature. Joint models for longitudinal and survival data link repeated measurements and event times through shared random effects or other latent structures
\citep{tsiatisEtAl1995,wulfsohnTsiatis1997,hendersonEtAl2000,songEtAl2002,tsiatisDavidian2004}.
Surrogate endpoint methods ask whether an intermediate outcome can replace or predict a clinical endpoint \citep{prentice1989}.
Auxiliary data have also been used to improve semiparametric estimation in missing-data and validation-sample problems
\citep{chenEtAl2003,chenEtAl2008,chenHongTarozzi2008}.
For survival outcomes, external summaries, subgroup information, and related source populations have been incorporated into Cox regression to improve estimation efficiency
\citep{huangEtAl2016,huangQin2020,shengEtAl2022,liEtAl2023}.
These approaches address different forms of information sharing. Our setting is one in which the survival and auxiliary outcomes are observed on the same individuals, and the goal is to use similarities in their relationships with baseline covariates without imposing a full joint model.

Let $Y$ denote a continuous auxiliary outcome, $T$ the event time, $C$ the censoring time, $Z\in\mathbb R^p$ a low-dimensional covariate vector containing the variables of primary scientific interest, and $X$ a vector of baseline covariates used for adjustment. We consider
\begin{align}
Y
&=
Z^\top\alpha_0 + f_{Y0}(X) + \varepsilon,
\label{eq:introaux}\\
\lambda(t\mid Z,X)
&=
\lambda_0(t)\exp\{Z^\top\beta_0+f_{T0}(X)\},
\label{eq:introcox}
\end{align}
where $\alpha_0\in\mathbb R^p$ is the auxiliary regression coefficient, $\varepsilon$ satisfies $\E(\varepsilon\mid Z,X)=0$, $\lambda_0(t)$ is an unspecified baseline hazard function, and $f_{Y0}$ and $f_{T0}$ are unknown functions describing the effects of the baseline covariates on the two outcomes. The parameter of interest is $\beta_0\in\mathbb R^p$. In a randomized trial, for example, $Z$ may include a treatment indicator, in which case a component of $\beta_0$ is the treatment log-hazard ratio.

We do not require the two outcomes to have the same treatment effect, nor do we require their baseline covariate effects to be identical. Instead, the two regression functions are estimated jointly with a quadratic penalty that encourages similarity between them. Specifically, if $f_Y$ and $f_T$ denote candidate regression functions for the auxiliary and survival outcomes, respectively, we use a penalty proportional to
\[
\|f_Y-f_T\|_n^2,
\qquad
\|g\|_n^2=\frac{1}{n}\sum_{i=1}^n g^2(X_i).
\]
The amount of information sharing is controlled by a nonnegative tuning parameter. With no penalization, the two outcomes are analyzed separately. As the penalty increases, information from the auxiliary outcome has a greater influence on estimation of the survival covariate function, while the two functions are still allowed to differ unless equality is imposed in the limit. Because $f_Y$ and $f_T$ are measured on different scales, their difference is meaningful only after the relative scale of the two outcomes has been fixed. We therefore standardize the auxiliary outcome before fitting the model and use equality of the two baseline covariate functions as a reference case for studying information sharing. A proportional formulation, in which the two functions may differ by an unknown multiplicative factor, is discussed in Appendix~\ref{app:rho}.

The remainder of the paper is organized as follows. Section~\ref{sec:model} introduces the model and estimation procedure. Section~\ref{sec:theory} presents the asymptotic results. Sections~\ref{sec:simulation} and \ref{sec:application} contain the simulation study and the ACTG 175 analysis, respectively. Proofs and additional technical details are given in the appendices.

\section{Model and estimation}
\label{sec:model}

\subsection{Outcome models}

Let
\[
O_i=(Y_i,\widetilde T_i,\Delta_i,Z_i,X_i),
\qquad i=1,\ldots,n,
\]
be independent and identically distributed observations, where $Y_i$ is the auxiliary outcome, $T_i$ is the event time, $C_i$ is the censoring time,
\[
\widetilde T_i=\min(T_i,C_i),
\qquad
\Delta_i=\ind(T_i\le C_i),
\]
and $Z_i\in\mathbb R^p$ and $X_i$ denote, respectively, the covariates of primary interest and the baseline covariates used for adjustment.

For subject $i$, define the counting process and at-risk process by
\[
N_i(t)=\ind(\widetilde T_i\le t,\Delta_i=1),
\qquad
R_i(t)=\ind(\widetilde T_i\ge t).
\]
When no subject index is needed, $N(t)$ and $R(t)$ denote the corresponding processes for a generic observation. We assume conditional independent censoring,
\[
T\perp C\mid(Z,X).
\]

The auxiliary outcome is modeled by
\begin{equation}
Y
=
Z^\top\alpha_0+f_{Y0}(X)+\varepsilon,
\qquad
\E(\varepsilon\mid Z,X)=0,
\label{eq:auxmodel}
\end{equation}
where $\alpha_0\in\mathbb R^p$ and $f_{Y0}$ is an unknown function of the baseline covariates. For the event time, we assume the proportional hazards model of \citet{cox1972},
\begin{equation}
\lambda(t\mid Z,X)
=
\lambda_0(t)
\exp\{Z^\top\beta_0+f_{T0}(X)\},
\label{eq:coxmodel}
\end{equation}
where $\lambda_0(t)$ is an unspecified baseline hazard, $\beta_0\in\mathbb R^p$ is the regression parameter of interest, and $f_{T0}$ is an unknown baseline covariate effect.

We impose the centering conditions
\begin{equation}
\E\{f_{Y0}(X)\}
=
\E\{f_{T0}(X)\}
=
0.
\label{eq:center}
\end{equation}

\subsection{Scale of the auxiliary outcome}
\label{subsec:scale}

Because $f_Y$ and $f_T$ are measured on different scales, their difference
is interpreted on a scale chosen before fitting the model. The function
$f_T$ is on the log-hazard scale, whereas $f_Y$ is on the scale of the
auxiliary outcome. The theoretical results treat this outcome scale as
fixed. Scaling the auxiliary squared-error criterion by the residual
variance is a separate operation, described in
Section~\ref{subsec:estimation}.

Under this normalization, equality of the two covariate functions provides a
reference case for studying information sharing. More generally, one may allow
the two functions to be proportional,
\[
f_{Y0}(x)=\rho_0 f_{T0}(x),
\]
where $\rho_0$ is a proportionality constant, and use the penalty
\[
\frac{\lambda}{2}\norm{f_Y-\rho f_T}_n^2,
\]
where $\lambda\ge0$ controls the penalty and $\rho$ is the candidate
proportionality parameter. This extension is discussed in
Appendix~\ref{app:rho}. The main results compare the two functions directly
on the chosen scale, with equality serving as the reference case.

\subsection{Penalized estimation}
\label{subsec:estimation}

Let $\sigma_Y^2>0$ denote the residual variance used to scale the
auxiliary criterion. Under the homoskedastic auxiliary model,
$\E(\varepsilon^2\mid Z,X)=\sigma_Y^2$. We first treat $\sigma_Y^2$
as fixed and define
\begin{equation}
L_{Y,n}(\alpha,f_Y)
=
\frac{1}{2n\sigma_Y^2}
\sum_{i=1}^n
\{Y_i-Z_i^\top\alpha-f_Y(X_i)\}^2.
\label{eq:auxloss}
\end{equation}
A consistent estimate of $\sigma_Y^2$ may be used in implementation.
When this variance or the outcome normalization is estimated from the
same data, its contribution to the first-order expansion must be included
unless it is shown to be asymptotically negligible.

For the survival outcome, let $0<\tau<\infty$ denote a fixed
analysis horizon. Up to a term independent of the parameters, the
negative Cox partial log-likelihood is
\begin{equation}
L_{T,n}(\beta,f_T)
=
-\frac1n\sum_{i=1}^n\int_0^\tau
\left[
Z_i^\top\beta+f_T(X_i)
-\log S_n^{(0)}(t;\beta,f_T)
\right]dN_i(t),
\label{eq:coxloss}
\end{equation}
where
\[
S_n^{(0)}(t;\beta,f)
=
\frac1n\sum_{j=1}^n
R_j(t)\exp\{Z_j^\top\beta+f(X_j)\}.
\]
For a function $g$, write
\[
\norm{g}_n^2
=
\frac1n\sum_{i=1}^n g^2(X_i).
\]

Let $\mathcal F_n$ be a centered sieve class satisfying
\[
\frac1n\sum_{i=1}^n f(X_i)=0,
\qquad f\in\mathcal F_n.
\]
For $\lambda\ge0$, define
\begin{equation}
Q_n(\alpha,\beta,f_Y,f_T;\lambda)
=
L_{Y,n}(\alpha,f_Y)
+
L_{T,n}(\beta,f_T)
+
\frac{\lambda}{2}\norm{f_Y-f_T}_n^2.
\label{eq:criterion}
\end{equation}
Given a penalty level $\lambda_n\ge0$, the proposed estimator is
\begin{equation}
(\widehat\alpha_{\lambda_n},
 \widehat\beta_{\lambda_n},
 \widehat f_{Y,\lambda_n},
 \widehat f_{T,\lambda_n})
\in
\argmin_{\substack{
\alpha,\beta\in\mathbb R^p\\
f_Y,f_T\in\mathcal F_n}}
Q_n(\alpha,\beta,f_Y,f_T;\lambda_n).
\label{eq:estimator}
\end{equation}
When $\lambda_n=0$, the two outcomes are fitted separately. Positive
values of $\lambda_n$ encourage similarity between the two estimated
covariate functions while allowing them to remain distinct.

For later use, let $L_Y$ and $L_T$ denote the population counterparts
of the two endpoint-specific criteria. For a fixed $\lambda\ge0$, the
corresponding population criterion is
\begin{equation}
Q(\alpha,\beta,f_Y,f_T;\lambda)
=
L_Y(\alpha,f_Y)
+
L_T(\beta,f_T)
+
\frac{\lambda}{2}\norm{f_Y-f_T}_{P_X}^2,
\label{eq:popQ}
\end{equation}
where $P_X$ denotes the distribution of $X$ and
\[
\norm{g}_{P_X}^2=\E\{g^2(X)\},
\qquad
\ip{g}{h}_{P_X}=\E\{g(X)h(X)\}.
\]

The criterion in \eqref{eq:criterion} is not a fully specified joint
likelihood for $(Y,T)$. Inference is based on the sandwich covariance
associated with the joint estimating equations. This
formulation allows the auxiliary and survival outcomes from the same
individual to be dependent. The variance comparison in
Section~\ref{sec:exact} imposes an additional first-order orthogonality
condition on their score contributions; this condition does not require
conditional independence of the two outcomes. A sandwich covariance
describes sampling variation around the relevant population target; it
does not correct a change in that target caused by penalization.

\subsection{Population target under penalization}
\label{sec:population}

We examine whether penalization preserves the population target when the
two covariate functions are equal and when they differ.

\subsubsection{Equal baseline covariate effects}

Suppose first that
\begin{equation}
f_{Y0}=f_{T0}=f_0.
\label{eq:exactsharing}
\end{equation}
In this case the penalty vanishes at the true pair of covariate
functions, so penalization does not alter the population target.

\begin{proposition}
\label{prop:exacttarget}
Suppose $(\alpha_0,f_0)$ minimizes $L_Y$ and $(\beta_0,f_0)$ minimizes
$L_T$ over their identifiable parameter spaces. If
\eqref{eq:exactsharing} holds, then, for every fixed $\lambda\ge0$,
\[
(\alpha_0,\beta_0,f_0,f_0)
\]
minimizes $Q(\alpha,\beta,f_Y,f_T;\lambda)$. If the endpoint-specific
minimizers are unique, then the minimizer of $Q$ is also unique.
\end{proposition}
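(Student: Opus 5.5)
The argument rests on a lower bound. Because the penalty term $\frac{\lambda}{2}\norm{f_Y-f_T}_{P_X}^2$ is nonnegative, the population criterion \eqref{eq:popQ} satisfies, for any admissible $(\alpha,\beta,f_Y,f_T)$ (with $f_Y,f_T$ centered, as in \eqref{eq:center}),
\[
Q(\alpha,\beta,f_Y,f_T;\lambda)\ \ge\ L_Y(\alpha,f_Y)+L_T(\beta,f_T)\ \ge\ L_Y(\alpha_0,f_0)+L_T(\beta_0,f_0).
\]
The second inequality holds because $L_Y$ depends only on $(\alpha,f_Y)$ and $L_T$ only on $(\beta,f_T)$, so each summand can be minimized separately using the hypotheses of the proposition. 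At the candidate point $(\alpha_0,\beta_0,f_0,f_0)$ the penalty vanishes, since $f_Y-f_T=f_0-f_0=0$ under \eqref{eq:exactsharing}. Hence $Q$ at that point equals the lower bound $L_Y(\alpha_0,f_0)+L_T(\beta_0,f_0)$, which shows it is a minimizer for every fixed $\lambda\ge0$. The case $\lambda=0$ needs no separate treatment.

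For uniqueness, let $(\alpha^*,\beta^*,f_Y^*,f_T^*)$ be any minimizer of $Q$. Its value must equal the lower bound, so every inequality in the chain above is an equality. This gives
\[
L_Y(\alpha^*,f_Y^*)+L_T(\beta^*,f_T^*)+\tfrac{\lambda}{2}\norm{f_Y^*-f_T^*}_{P_X}^2=L_Y(\alpha_0,f_0)+L_T(\beta_0,f_0).
\]
Each of the three terms on the left is bounded below by its counterpart on the right: $L_Y(\alpha^*,f_Y^*)\ge L_Y(\alpha_0,f_0)$, $L_T(\beta^*,f_T^*)\ge L_T(\beta_0,f_0)$, and the penalty is at least $0$. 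So all three bounds must be tight. In particular $L_Y(\alpha^*,f_Y^*)=L_Y(\alpha_0,f_0)$, and uniqueness of the endpoint-specific minimizer then forces $(\alpha^*,f_Y^*)=(\alpha_0,f_0)$. The same argument for $L_T$ gives $(\beta^*,f_T^*)=(\beta_0,f_0)$.

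The step I expect to need the most care is not analytic but concerns making the parameter spaces precise. Uniqueness must be understood on the identifiable spaces, where the functions are centered and equality is $P_X$-almost sure. The minimizer of $Q$ must also be taken over the product of the same spaces on which $L_Y$ and $L_T$ are minimized; otherwise the decoupling in the second inequality fails. I will state this convention explicitly: minimize $Q$ over $\mathbb R^p\times\mathbb R^p\times\mathcal F\times\mathcal F$, with $\mathcal F$ the centered $L_2(P_X)$ class. Under that convention the argument is complete, with no convexity or differentiability assumptions required.
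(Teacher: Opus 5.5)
Your proposal is correct and follows essentially the same argument as the paper: you bound $Q$ below by the sum of the separately minimized losses using nonnegativity of the penalty, note that the penalty vanishes at $(\alpha_0,\beta_0,f_0,f_0)$, and obtain uniqueness by forcing each loss difference to zero. Your explicit convention that $Q$ is minimized over the product of the same centered identifiable spaces is a useful clarification of an assumption the paper's proof uses implicitly.
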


Thus, when the two baseline covariate functions coincide, information
sharing through the penalty does not change the population target. Its effect is
instead on the sampling variability of the estimator, which is studied
in Section~\ref{sec:exact}.

\subsubsection{Fixed differences}

Now suppose that
\[
d_0=f_{Y0}-f_{T0},
\qquad \norm{d_0}_{P_X}>0.
\]
Let
\[
\mathcal P(f_Y,f_T)
=
\frac{\lambda}{2}\norm{f_Y-f_T}_{P_X}^2
\]
denote the population penalty. Its directional derivative at
$(f_{Y0},f_{T0})$ in a direction $(h_Y,h_T)$ is
\begin{equation}
D\mathcal P(f_{Y0},f_{T0})[h_Y,h_T]
=
\lambda\ip{d_0}{h_Y-h_T}_{P_X}.
\label{eq:penalty-direction}
\end{equation}
Unlike the equal-effect case, this derivative need not vanish at the
unpenalized truth.

\begin{proposition}
\label{prop:fixedpop}
Suppose $L_Y$ and $L_T$ are differentiable along admissible directions
at their respective interior minimizers $(\alpha_0,f_{Y0})$ and
$(\beta_0,f_{T0})$, where interiority is relative to the identifying
constraints. Let $\lambda>0$.
If there exists an admissible direction $(h_Y,h_T)$ such that
\[
\ip{d_0}{h_Y-h_T}_{P_X}\neq0,
\]
then
\[
(\alpha_0,\beta_0,f_{Y0},f_{T0})
\]
is not a stationary point of the penalized population criterion.
\end{proposition}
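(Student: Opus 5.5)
The plan is to compute the directional derivative of the penalized population criterion $Q(\cdot;\lambda)$ at $\theta_0=(\alpha_0,\beta_0,f_{Y0},f_{T0})$, show that the endpoint-specific parts vanish, and show that the penalty part is nonzero in the given direction. Fix the admissible direction $(h_Y,h_T)$ from the statement, and consider the perturbation
\[
\theta_s=(\alpha_0,\beta_0,f_{Y0}+s h_Y,f_{T0}+s h_T),
\]
with the Euclidean components held at their true values. Admissibility means that $\theta_s$ satisfies the centering constraints \eqref{eq:center} for all small $|s|$. Because $Q$ is the sum $L_Y+L_T+\mathcal P$, its one-sided derivative at $s=0$ splits into three terms.

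\emph{Step 1: the endpoint-specific terms vanish.} Since $(\alpha_0,f_{Y0})$ is an interior minimizer of $L_Y$ relative to the identifying constraints, the map $s\mapsto L_Y(\alpha_0,f_{Y0}+s h_Y)$ is defined for $s$ on both sides of $0$ and is minimized at $s=0$. It is differentiable there by assumption, so $DL_Y(\alpha_0,f_{Y0})[0,h_Y]=0$. The same argument applied to $L_T$ at $(\beta_0,f_{T0})$ gives $DL_T(\beta_0,f_{T0})[0,h_T]=0$.

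\emph{Step 2: the penalty term is nonzero.} Expanding,
\[
\mathcal P(f_{Y0}+sh_Y,f_{T0}+sh_T)=\frac{\lambda}{2}\norm{d_0}_{P_X}^2+s\lambda\ip{d_0}{h_Y-h_T}_{P_X}+\frac{\lambda s^2}{2}\norm{h_Y-h_T}_{P_X}^2 .
\]
This confirms \eqref{eq:penalty-direction}, provided $h_Y,h_T\in L_2(P_X)$. Combining with Step 1,
\[
\frac{d}{ds}Q(\theta_s;\lambda)\Big|_{s=0}=\lambda\ip{d_0}{h_Y-h_T}_{P_X}\neq0,
\]
because $\lambda>0$ and the inner product is nonzero by hypothesis. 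A nonzero directional derivative along an admissible two-sided direction means $\theta_0$ is not a stationary point. More concretely, replacing $h$ by $-h$ if necessary, $Q(\theta_s;\lambda)<Q(\theta_0;\lambda)$ for small $s>0$, so $\theta_0$ is not even a local minimizer.

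\emph{Main obstacle.} The only delicate point is Step 1, which needs two-sided admissibility. If $\theta_0$ sat on the boundary of the constraint set, a minimizer would only have a nonnegative one-sided derivative, and the argument would fail. This is exactly why the statement assumes interiority relative to the identifying constraints. I would therefore take the constraint set to be the affine (linear) space of centered functions. Then $-(h_Y,h_T)$ is admissible whenever $(h_Y,h_T)$ is, and the derivative in Step 1 is zero rather than merely sign-constrained. For the Cox component I would also check that the population partial likelihood is finite near $f_{T0}$ along $h_T$, for example by taking $h_T$ bounded. This is covered by the assumed differentiability along admissible directions.
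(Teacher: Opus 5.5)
Your proposal is correct and is essentially the paper's argument: the directional derivatives of $L_Y$ and $L_T$ vanish at their interior minimizers, so the derivative of $Q$ along $(h_Y,h_T)$, with $(\alpha,\beta)$ held fixed, reduces to $\lambda\ip{d_0}{h_Y-h_T}_{P_X}\neq0$. Your explicit quadratic expansion of the penalty and your remark that interiority (two-sided admissibility in the linear space of centered functions) is what forces the endpoint derivatives to be exactly zero make explicit what the paper's short proof leaves implicit.
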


For the linear function spaces considered below, this condition is
satisfied whenever $d_0$ is an admissible direction. For example, taking
$h_Y=d_0$ and $h_T=0$ gives
\[
\ip{d_0}{h_Y-h_T}_{P_X}
=
\norm{d_0}_{P_X}^2
>0.
\]
Thus, when the two baseline covariate functions differ by a fixed
amount, a fixed positive penalty generally changes the population
target. This motivates allowing the amount of borrowing to decrease
when a persistent difference between the two functions is present.

\section{Asymptotic properties}
\label{sec:theory}

\subsection{Fixed-sieve expansion}
\label{sec:firstorder}

We begin with a fixed-dimensional sieve, for which the contribution of
auxiliary information can be described through the information matrices
of the two outcomes. The extension to a growing sieve is given in
Section~\ref{sec:growing}. Proofs are given in
Appendix~\ref{app:proofs}, with additional matrix calculations in
Appendix~\ref{app:algebra}.

\subsubsection{Sieve representation and score functions}

Let $b_K(x)\in\R^K$ be a fixed basis for the identifiable nuisance space
and write
\begin{equation}
f_Y(x)=b_K(x)^\top\theta_Y,
\qquad
f_T(x)=b_K(x)^\top\theta_T,
\label{eq:sieve}
\end{equation}
where $\theta_Y,\theta_T\in\R^K$. For the fixed-sieve results, we assume
that the true functions have this representation, with coefficients
$\theta_{Y0}$ and $\theta_{T0}$. Approximation error is considered in
Section~\ref{sec:growing}.

Write $\Pn q=n^{-1}\sum_{i=1}^n q(O_i)$ for an empirical average and define
\[
G_n=\Pn\{b_K(X)b_K(X)^\top\},
\qquad
G=\E\{b_K(X)b_K(X)^\top\}.
\]
The penalty in \eqref{eq:criterion} becomes
\begin{equation}
\frac{\lambda}{2}
(\theta_Y-\theta_T)^\top G_n(\theta_Y-\theta_T).
\label{eq:penaltytheta}
\end{equation}

For the survival model, let
$W=(Z^\top,b_K(X)^\top)^\top\in\R^{p+K}$ and define the population
risk-set moments
\[
s^{(r)}(t)
=
\E\left\{
R(t)\exp\bigl(Z^\top\beta_0+b_K(X)^\top\theta_{T0}\bigr)
W^{\otimes r}
\right\},
\qquad r=0,1,2,
\]
where $W^{\otimes0}=1$, $W^{\otimes1}=W$, and
$W^{\otimes2}=WW^\top$. Put
$\bar W(t)=s^{(1)}(t)/s^{(0)}(t)$ and
$\Lambda_0(t)=\int_0^t\lambda_0(u)\,du$. Under the Cox model and
conditional independent censoring, the event martingale is
\[
M(t)
=
N(t)-\int_0^t
R(u)\exp\bigl(Z^\top\beta_0+b_K(X)^\top\theta_{T0}\bigr)
\,d\Lambda_0(u).
\]
The limiting contribution of one observation to the Cox score is
\begin{equation}
s_S(O)
=
\int_0^\tau\{W-\bar W(t)\}\,dM(t)
=
\begin{pmatrix}
s_\beta(O)\\
s_T(O)
\end{pmatrix}.
\label{eq:coxscore}
\end{equation}
Here $s_\beta(O)\in\R^p$ and $s_T(O)\in\R^K$ correspond to $\beta$ and
$\theta_T$, respectively. Partition the population Cox information as
\begin{equation}
\mathcal I_T
=
\begin{pmatrix}
A&C\\
C^\top&D
\end{pmatrix},
\label{eq:IT}
\end{equation}
where $A\in\R^{p\times p}$, $C\in\R^{p\times K}$, and
$D\in\R^{K\times K}$.

For the auxiliary model, the population Hessian of the scaled
squared-error loss is
\[
H_Y
=
\frac{1}{\sigma_Y^2}\E(WW^\top)
=
\begin{pmatrix}
H_{ZZ}&H_{Zb}\\
H_{bZ}&H_{bb}
\end{pmatrix}.
\]
When $H_{ZZ}$ is nonsingular, profiling out $\alpha$ gives the Hessian
for $\theta_Y$,
\begin{equation}
E=H_{bb}-H_{bZ}H_{ZZ}^{-1}H_{Zb}.
\label{eq:Edef}
\end{equation}
Define the residualized basis
\begin{equation}
\widetilde b_K(X,Z)
=
\frac{1}{\sigma_Y}
\left\{b_K(X)-H_{bZ}H_{ZZ}^{-1}Z\right\}.
\label{eq:btilde}
\end{equation}
The corresponding profiled score is
\begin{equation}
s_Y(O)
=
\frac{\varepsilon}{\sigma_Y}\widetilde b_K(X,Z).
\label{eq:auxprofilescore}
\end{equation}
Under conditional homoskedasticity,
$\E(\varepsilon^2\mid Z,X)=\sigma_Y^2$, this score satisfies
\begin{equation}
\Var(s_Y)=E.
\label{eq:auxidentity}
\end{equation}
Without homoskedasticity, a constant scaling of the squared-error loss
need not give \eqref{eq:auxidentity}. A sandwich covariance can still be
used, but the variance ordering established below requires additional
conditions on the score covariance.

\subsubsection{Regularity conditions and limiting distribution}

After profiling out $\alpha$, write
\[
\vartheta=(\beta^\top,\theta_T^\top,\theta_Y^\top)^\top,
\qquad
\vartheta_0=(\beta_0^\top,\theta_{T0}^\top,\theta_{Y0}^\top)^\top,
\]
and let $\widehat\vartheta_\lambda$ denote the corresponding estimator.
We use the following conditions.

\begin{assumption}[Sampling and risk sets]
\label{ass:sampling}
The observations are independent and identically distributed, and
$T\perp C\mid(Z,X)$. The observation horizon satisfies $\tau<\infty$,
$\Pp(\widetilde T\ge\tau)>0$, and $\Lambda_0(\tau)<\infty$. The function
$s^{(0)}(t)$ is bounded away from zero on $[0,\tau]$.
\end{assumption}

\begin{assumption}[Nonsingularity]
\label{ass:pd}
For fixed $K$, the matrices $H_Y$, $\mathcal I_T$, and $G$ are finite,
$H_{ZZ}$ is nonsingular, and $G$, $E$, and $\mathcal I_T$ are positive
definite.
\end{assumption}

\begin{assumption}[Information identities]
\label{ass:identity}
The outcome scores satisfy
\[
\Var(s_S)=\mathcal I_T,
\qquad
\Var(s_Y)=E.
\]
\end{assumption}

\begin{assumption}[Score orthogonality]
\label{ass:crossorth}
The profiled auxiliary score is uncorrelated with the Cox score:
\[
\Cov(s_Y,s_\beta)=0,
\qquad
\Cov(s_Y,s_T)=0.
\]
\end{assumption}

\begin{assumption}[Local expansion]
\label{ass:quad}
For each fixed $K$ and $\lambda$, the profiled estimator is consistent
for the relevant population minimizer. The profiled estimating equations
admit a Taylor expansion about that minimizer, with an
$o_p(n^{-1/2})$ remainder when evaluated at the estimator. The empirical
derivative matrix converges in probability to its nonsingular population
counterpart, and the empirical unpenalized score has an asymptotically
linear representation satisfying a joint central limit theorem.
\end{assumption}

A sufficient condition for Assumption~\ref{ass:crossorth} is
\begin{equation}
\E\{\varepsilon\,dM(t)\mid Z,X\}=0,
\qquad t\in[0,\tau],
\label{eq:momentorth}
\end{equation}
with the integrability needed to interchange expectation and integration.
This is a moment restriction on the score contributions, rather than an
assumption of conditional independence between the outcomes.

\begin{lemma}
\label{lem:orth}
Suppose \eqref{eq:momentorth} holds and the score contributions are
square-integrable. Then the profiled auxiliary score in
\eqref{eq:auxprofilescore} is uncorrelated with both components of the
Cox score in \eqref{eq:coxscore}.
\end{lemma}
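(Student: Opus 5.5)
The plan is to compute the cross-covariance directly by conditioning on $(Z,X)$. First, note that both scores have mean zero. For $s_Y$, we have $\E(\varepsilon\mid Z,X)=0$ and $\widetilde b_K(X,Z)$ is $(Z,X)$-measurable. For $s_S$, the event martingale $M$ has mean zero increments given $(Z,X)$ under the Cox model and conditional independent censoring. It therefore suffices to show that
\[
\E\{s_Y(O)\,s_S(O)^\top\}=0 .
\]
Writing $W-\bar W(t)$ for the Cox integrand, we have
\[
s_Y(O)\,s_S(O)^\top
=
\frac{1}{\sigma_Y}\,\widetilde b_K(X,Z)\int_0^\tau \varepsilon\,\{W-\bar W(t)\}^\top\,dM(t).
\]
Here $\widetilde b_K(X,Z)$ and $W=(Z^\top,b_K(X)^\top)^\top$ are functions of $(Z,X)$ only, and $\bar W(t)$ is a deterministic function of $t$.

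The second step is to take conditional expectations given $(Z,X)$. This lets us pull $\widetilde b_K$ outside, leaving
\[
\E\Bigl[\int_0^\tau \varepsilon\,\{W-\bar W(t)\}^\top dM(t)\Bigm| Z,X\Bigr].
\]
The integrand $\{W-\bar W(t)\}$ is non-random given $(Z,X)$, so I will interchange conditional expectation and the Stieltjes integral in $t$. This is justified by the integrability assumed with \eqref{eq:momentorth}, together with square-integrability of the score contributions and boundedness of $\bar W$ on $[0,\tau]$ under Assumption~\ref{ass:sampling}. The interchange gives
\[
\int_0^\tau \{W-\bar W(t)\}^\top\,\E\{\varepsilon\,dM(t)\mid Z,X\}.
\]
This vanishes identically by \eqref{eq:momentorth}. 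Taking the outer expectation then yields $\Cov(s_Y,s_S)=0$. Reading off the blocks of $s_S$ in \eqref{eq:coxscore} gives both $\Cov(s_Y,s_\beta)=0$ and $\Cov(s_Y,s_T)=0$.

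The only delicate point is the interchange step. There I must interpret $\E\{\varepsilon\,dM(t)\mid Z,X\}=0$ in measure form: the conditional expectation of $\varepsilon\{M(t)-M(s)\}$ is zero for all $s<t$ in $[0,\tau]$, so that $\E\{\varepsilon M(\cdot)\mid Z,X\}$ is constant in $t$. I will first handle step-function integrands, namely $W-\bar W$ replaced by simple approximations of $\bar W$ on $[0,\tau]$. I will then pass to the limit by dominated convergence, using the bound
\[
|\varepsilon|\,\bigl\{N(\tau)+\Lambda_0(\tau)\,e^{Z^\top\beta_0+b_K(X)^\top\theta_{T0}}\bigr\}\sup_t\|W-\bar W(t)\|,
\]
which is integrable by the assumed square-integrability and the Cauchy--Schwarz inequality.
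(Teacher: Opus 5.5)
Your argument is correct and is essentially the paper's proof. You condition on $(Z,X)$, take the baseline-measurable factors $\widetilde b_K(X,Z)$ and $W-\bar W(t)$ outside, let \eqref{eq:momentorth} make the inner integral vanish, and read off the $s_\beta$ and $s_T$ blocks; your step-function and dominated-convergence argument just spells out the interchange that the paper folds into its hypothesis.

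One claim is loose: unconditional integrability of your dominating bound via Cauchy--Schwarz would need a moment such as $\E\bigl[\exp\{2(Z^\top\beta_0+b_K(X)^\top\theta_{T0})\}\norm{W}^2\bigr]<\infty$, which square-integrability of the scores does not supply. This is harmless, because the bound only needs to be integrable conditionally on $(Z,X)$, where it is $|\varepsilon|$ times a $(Z,X)$-measurable factor. The outer expectation is also justified, since $s_Ys_S^\top$ is integrable by Cauchy--Schwarz.
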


Under equal covariate effects,
$\theta_{Y0}=\theta_{T0}=\theta_0$, the penalty gradient vanishes at the
truth. The penalty contributes to the derivative matrix but not to the
random score at that point. The derivative and score covariance matrices
are therefore
\begin{equation}
\mathcal A_\lambda
=
\begin{pmatrix}
A&C&0\\
C^\top&D+\lambda G&-\lambda G\\
0&-\lambda G&E+\lambda G
\end{pmatrix}
\label{eq:Alambda}
\end{equation}
and
\begin{equation}
\mathcal B
=
\Var\begin{pmatrix}s_\beta\\s_T\\s_Y\end{pmatrix}
=
\begin{pmatrix}
A&C&0\\
C^\top&D&0\\
0&0&E
\end{pmatrix}.
\label{eq:B}
\end{equation}

\begin{theorem}
\label{thm:rootfixed}
Suppose $f_{Y0}=f_{T0}$ and
Assumptions~\ref{ass:sampling}--\ref{ass:quad} hold. For each fixed
$\lambda\ge0$,
\begin{equation}
\sqrt n(\widehat\vartheta_\lambda-\vartheta_0)
=
\mathcal A_\lambda^{-1}
\frac{1}{\sqrt n}\sum_{i=1}^n
\begin{pmatrix}
s_\beta(O_i)\\s_T(O_i)\\s_Y(O_i)
\end{pmatrix}
+o_p(1).
\label{eq:rootfull}
\end{equation}
Consequently,
\[
\sqrt n(\widehat\vartheta_\lambda-\vartheta_0)
\rightsquigarrow
N\{0,\mathcal A_\lambda^{-1}\mathcal B\mathcal A_\lambda^{-T}\}.
\]
The asymptotic covariance of
$\sqrt n(\widehat\beta_\lambda-\beta_0)$ is the $(\beta,\beta)$ block of
this sandwich matrix.
\end{theorem}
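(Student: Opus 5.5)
The plan is a standard M-estimation (Z-estimator) argument applied to the profiled penalized criterion. The ingredients are Proposition~\ref{prop:exacttarget}, which identifies the target, and Assumption~\ref{ass:quad}, which supplies the expansion.

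\textbf{Step 1: the target is unchanged.} Since $f_{Y0}=f_{T0}$, we have $\theta_{Y0}=\theta_{T0}=\theta_0$. Proposition~\ref{prop:exacttarget} then shows that $\vartheta_0$ is the population minimizer of the penalized criterion for every fixed $\lambda\ge0$. By the consistency part of Assumption~\ref{ass:quad}, $\widehat\vartheta_\lambda\to_p\vartheta_0$.

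\textbf{Step 2: the estimating equations and their expansion.} Write the profiled stationarity equations $\Psi_n(\vartheta)=0$. Their components are:
\begin{itemize}
\item the $\beta$-block: the empirical Cox score in $\beta$;
\item the $\theta_T$-block: the Cox score in $\theta_T$ minus $\lambda G_n(\theta_Y-\theta_T)$;
\item the $\theta_Y$-block: the profiled auxiliary score plus $\lambda G_n(\theta_Y-\theta_T)$.
\end{itemize}
Sign conventions here follow minimization, so $\Psi_n=-\nabla Q_n$ up to sign. At $\vartheta_0$ the penalty gradient $\lambda G_n(\theta_{Y0}-\theta_{T0})$ vanishes identically, not merely asymptotically. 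Hence $\sqrt n\,\Psi_n(\vartheta_0)$ equals the unpenalized empirical score.

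For the auxiliary block, profiling out $\alpha$ replaces $b_K(X)$ by the residualized basis \eqref{eq:btilde}. The estimation error in $\widehat\alpha$ and in $H_{bZ}H_{ZZ}^{-1}$ enters only at order $o_p(1)$ after scaling, because the population cross-derivative is exactly the one being projected out. The asymptotically linear representation in Assumption~\ref{ass:quad} then gives
\[
\sqrt n\,\Psi_n(\vartheta_0)=n^{-1/2}\sum_{i}\bigl(s_\beta(O_i)^\top,s_T(O_i)^\top,s_Y(O_i)^\top\bigr)^\top+o_p(1).
\]
For the Cox part, the empirical score at the truth equals $n^{-1/2}\sum_i\int_0^\tau\{W_i-\bar W_n(t)\}\,dM_i(t)$. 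Replacing $\bar W_n$ by $\bar W$ costs $o_p(1)$: this uses Assumption~\ref{ass:sampling}, since $s^{(0)}$ is bounded below, together with uniform convergence of the risk-set averages and the fact that $\sum_i dM_i$ is a martingale.

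\textbf{Step 3: the derivative matrix.} The Jacobian of the unpenalized part converges to $\mathrm{diag}(\mathcal I_T,E)$. Here $\mathcal I_T$ is the limit of the Cox observed information, and $E$ is the profiled auxiliary Hessian \eqref{eq:Edef}. There are no cross terms, because the two losses share no parameters after profiling. The penalty Hessian is
\[
\lambda\begin{pmatrix}G_n&-G_n\\-G_n&G_n\end{pmatrix}
\]
on the $(\theta_T,\theta_Y)$ blocks, and $G_n\to_p G$. The limit is therefore exactly $\mathcal A_\lambda$ in \eqref{eq:Alambda}.

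$\mathcal A_\lambda$ is nonsingular because it is the sum of a positive definite matrix, $\mathrm{diag}(\mathcal I_T,E)$ by Assumption~\ref{ass:pd}, and a positive semidefinite matrix. A Taylor expansion with the $o_p(n^{-1/2})$ remainder from Assumption~\ref{ass:quad}, applied at $\widehat\vartheta_\lambda$ where $\Psi_n(\widehat\vartheta_\lambda)=0$, gives
\[
0=\Psi_n(\vartheta_0)-\mathcal A_{\lambda,n}(\widehat\vartheta_\lambda-\vartheta_0)+o_p(n^{-1/2}).
\]
Inverting and using Slutsky's lemma yields \eqref{eq:rootfull}.

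\textbf{Step 4: the limit law.} The multivariate CLT applies to the i.i.d. mean-zero scores. Their covariance is $\mathcal B$: the diagonal blocks are given by Assumption~\ref{ass:identity}, and the off-diagonal zeros come from Assumption~\ref{ass:crossorth} (equivalently, from Lemma~\ref{lem:orth} under \eqref{eq:momentorth}). The continuous mapping theorem then gives the sandwich $\mathcal A_\lambda^{-1}\mathcal B\mathcal A_\lambda^{-T}$. The $\beta$-marginal is the corresponding block.

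\textbf{Main obstacle.} The hardest step is Step 2's profiling argument: showing that estimating $\alpha$, and the empirical version of the projection, does not alter the $\theta_Y$ score to first order. I would handle this by writing the auxiliary equations jointly in $(\alpha,\theta_Y)$ and applying the block-inverse (Schur complement) formula. This shows that the $\theta_Y$-row of the inverse Hessian applied to the joint score equals $E^{-1}$ applied to the residualized score $s_Y$. Because the penalty does not involve $\alpha$, the same Schur reduction goes through with $E$ replaced by $E+\lambda G$ coupled to $\theta_T$.
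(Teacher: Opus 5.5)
Your proposal is correct and follows essentially the same route as the paper's proof: the penalty gradient vanishes at the shared truth, the profiled gradient is linearized with derivative limit $\mathcal A_\lambda$ (positive definite as $\mathrm{diag}(\mathcal I_T,E)$ plus a positive semidefinite penalty Hessian), and the joint CLT with covariance $\mathcal B$ from Assumptions~\ref{ass:identity}--\ref{ass:crossorth} gives the sandwich; your treatment of the empirical projection in $\alpha$ and of $\bar W_n$ versus $\bar W$ spells out what the paper absorbs into Assumption~\ref{ass:quad}. One small slip: with $\Psi_n=-\nabla Q_n$, the penalty enters the $\theta_T$-block as $+\lambda G_n(\theta_Y-\theta_T)$ and the $\theta_Y$-block as $-\lambda G_n(\theta_Y-\theta_T)$, the reverse of what you wrote; this is harmless only because the term vanishes at $\vartheta_0$ and you computed the Hessian of $Q_n$ directly rather than differentiating your stated $\Psi_n$.
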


\subsection{Efficiency under equal baseline covariate effects}
\label{sec:exact}

Let $V_\lambda$ denote the asymptotic covariance of
$\sqrt n(\widehat\beta_\lambda-\beta_0)$ in
Theorem~\ref{thm:rootfixed}, and let $V_0$ denote its counterpart under
separate estimation. To compare these matrices, we eliminate the
auxiliary nuisance coefficient from the linearized equations. Define
\begin{align}
W_\lambda
&=\lambda G(E+\lambda G)^{-1},
\label{eq:Wlambda}\\
K_\lambda
&=\lambda G-\lambda^2G(E+\lambda G)^{-1}G,
\label{eq:Klambda}\\
H_\lambda
&=D+K_\lambda,
\label{eq:Hlambda}\\
P_\lambda
&=A-CH_\lambda^{-1}C^\top,
\label{eq:Plambda}\\
J_\lambda
&=D+W_\lambda E W_\lambda^\top,
\label{eq:Jlambda}\\
\Delta_\lambda
&=H_\lambda-J_\lambda.
\label{eq:Deltalambda}
\end{align}
Here and below, $\preceq$ denotes the positive-semidefinite order.

\begin{lemma}
\label{lem:eliminate}
Under the conditions of Theorem~\ref{thm:rootfixed}, eliminating the
$\theta_Y$ coordinate gives a system for $(\beta,\theta_T)$ with
derivative matrix
\begin{equation}
\widetilde{\mathcal A}_\lambda
=
\begin{pmatrix}
A&C\\
C^\top&H_\lambda
\end{pmatrix}
\label{eq:Atilde}
\end{equation}
and score contribution
\begin{equation}
\widetilde s_\lambda
=
\begin{pmatrix}
s_\beta\\
s_T+W_\lambda s_Y
\end{pmatrix}.
\label{eq:stilde}
\end{equation}
Its covariance is
\begin{equation}
\widetilde{\mathcal B}_\lambda
=
\Var(\widetilde s_\lambda)
=
\begin{pmatrix}
A&C\\
C^\top&J_\lambda
\end{pmatrix}.
\label{eq:Btilde}
\end{equation}
\end{lemma}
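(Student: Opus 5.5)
Starting from the linear representation in Theorem~\ref{thm:rootfixed}, I would write the linearized estimating equations as
\[
\mathcal A_\lambda\,\sqrt n(\widehat\vartheta_\lambda-\vartheta_0)
=
n^{-1/2}\sum_{i=1}^n\bigl(s_\beta(O_i)^\top,s_T(O_i)^\top,s_Y(O_i)^\top\bigr)^\top+o_p(1),
\]
with unknowns $u=\sqrt n(\widehat\beta_\lambda-\beta_0)$, $v=\sqrt n(\widehat\theta_{T,\lambda}-\theta_{T0})$, $w=\sqrt n(\widehat\theta_{Y,\lambda}-\theta_{Y0})$. Write $\bar s_\beta,\bar s_T,\bar s_Y$ for the normalized score sums. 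The block structure in \eqref{eq:Alambda} makes the third row read $-\lambda G v+(E+\lambda G)w=\bar s_Y+o_p(1)$. Since $E$ is positive definite and $G$ positive definite (Assumption~\ref{ass:pd}), $E+\lambda G$ is positive definite for every $\lambda\ge0$, hence invertible. This gives
\[
w=(E+\lambda G)^{-1}\{\lambda G v+\bar s_Y\}+o_p(1).
\]

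I would then substitute this expression into the second row, $C^\top u+(D+\lambda G)v-\lambda G w=\bar s_T$. Collecting terms gives
\[
C^\top u+\{D+\lambda G-\lambda^2G(E+\lambda G)^{-1}G\}v=\bar s_T+\lambda G(E+\lambda G)^{-1}\bar s_Y+o_p(1).
\]
By \eqref{eq:Klambda}, \eqref{eq:Hlambda} and \eqref{eq:Wlambda}, the coefficient of $v$ is $H_\lambda$ and the right-hand side is $\bar s_T+W_\lambda\bar s_Y$. The first row, $Au+Cv=\bar s_\beta$, does not involve $w$ because the $(1,3)$ block of $\mathcal A_\lambda$ is zero. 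Together these yield the reduced system with matrix $\widetilde{\mathcal A}_\lambda$ in \eqref{eq:Atilde} and score $\widetilde s_\lambda$ in \eqref{eq:stilde}.

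I would also note that $\widetilde{\mathcal A}_\lambda$ is the Schur complement of the invertible block $E+\lambda G$ in $\mathcal A_\lambda$. Its nonsingularity therefore follows from that of $\mathcal A_\lambda$, which Assumption~\ref{ass:quad} provides. This is the formal justification that the elimination is legitimate and reproduces the same $(u,v)$ as Theorem~\ref{thm:rootfixed}.

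For the covariance, I would expand $\Var(\widetilde s_\lambda)$ blockwise using \eqref{eq:B}. The blocks are:
\begin{itemize}
\item $\Var(s_\beta)=A$.
\item $\Cov(s_\beta,s_T+W_\lambda s_Y)=C+\Cov(s_\beta,s_Y)W_\lambda^\top=C$, by Assumption~\ref{ass:crossorth}.
\item $\Var(s_T+W_\lambda s_Y)=D+W_\lambda\Cov(s_Y,s_T)+\Cov(s_T,s_Y)W_\lambda^\top+W_\lambda EW_\lambda^\top=D+W_\lambda EW_\lambda^\top=J_\lambda$, using Assumptions~\ref{ass:identity} and~\ref{ass:crossorth}.
\end{itemize}
This gives \eqref{eq:Btilde}.

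The only real care point is bookkeeping. First, $W_\lambda$ must multiply $s_Y$ on the left, so its transpose appears on the right in the covariance; $W_\lambda$ is not symmetric in general. Second, the $o_p(1)$ remainders must stay $o_p(1)$ after multiplication by the fixed matrices $(E+\lambda G)^{-1}$ and $\lambda G$, which is immediate because $K$ and $\lambda$ are fixed. No step is a genuine obstacle; the argument is a direct block Gaussian elimination.
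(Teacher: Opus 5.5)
Your proposal is correct and follows the paper's proof essentially step for step: solve the third block row for the $\theta_Y$ component, substitute into the second row to obtain $H_\lambda$ and $s_T+W_\lambda s_Y$, then compute the covariance blockwise using the information identities and score orthogonality. The only differences are cosmetic. You carry the $o_p(1)$ remainders on the root-$n$ system, whereas the paper eliminates on the exact relation $\mathcal A_\lambda\delta=s$ for a generic score contribution. You also add the observation that $\widetilde{\mathcal A}_\lambda$ is the Schur complement of $E+\lambda G$ in $\mathcal A_\lambda$, which the paper does not state explicitly.
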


\begin{lemma}
\label{lem:psd}
If $E$ and $G$ are positive definite, then, for every $\lambda\ge0$,
\[
K_\lambda\succeq0,
\qquad
\Delta_\lambda\succeq0.
\]
These conclusions do not require $E$ and $G$ to commute.
\end{lemma}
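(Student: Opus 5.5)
The plan is to rewrite both matrices in Woodbury form, so that $K_\lambda$ and $\Delta_\lambda$ become congruence transforms of positive definite matrices. This needs no commutativity between $E$ and $G$.

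\emph{The case $\lambda=0$.} Here $W_0=0$ and $K_0=0$, so $H_0=J_0=D$ and $\Delta_0=0$. Both claims hold trivially.

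\emph{Rewriting $K_\lambda$ for $\lambda>0$.} Fix $\lambda>0$ and put $S=\lambda G$, which is positive definite, and $M=(E+S)^{-1}$. Then
\[
K_\lambda=S-SMS .
\]
By the Woodbury identity, or by checking directly that $(E^{-1}+S^{-1})(S-SMS)=I$, this equals
\[
K_\lambda=(E^{-1}+S^{-1})^{-1}.
\]
For the direct check, expand the product and use $E^{-1}S+I=E^{-1}(E+S)$, so that $E^{-1}SMS=E^{-1}S-SMS\cdot 0+\dots$ collapses after collecting terms. Since $E^{-1}+S^{-1}$ is symmetric positive definite, $K_\lambda\succ0$, which proves the first claim.

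\emph{Rewriting $W_\lambda$.} Write
\[
E+S=E(E^{-1}+S^{-1})S .
\]
Then
\[
W_\lambda=S(E+S)^{-1}=S\,S^{-1}(E^{-1}+S^{-1})^{-1}E^{-1}=K_\lambda E^{-1}.
\]
Because $K_\lambda$ and $E$ are symmetric, $W_\lambda^\top=E^{-1}K_\lambda$. Hence
\[
W_\lambda E W_\lambda^\top=K_\lambda E^{-1}K_\lambda .
\]

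\emph{Expressing $\Delta_\lambda$.} From the definitions, $\Delta_\lambda=H_\lambda-J_\lambda=K_\lambda-W_\lambda EW_\lambda^\top$. Substituting the previous step,
\[
\Delta_\lambda=K_\lambda-K_\lambda E^{-1}K_\lambda
=K_\lambda\bigl(K_\lambda^{-1}-E^{-1}\bigr)K_\lambda
=K_\lambda S^{-1}K_\lambda
=\lambda^{-1}K_\lambda G^{-1}K_\lambda .
\]
This is a congruence of the positive definite matrix $G^{-1}$. It is therefore positive semidefinite, and in fact positive definite for $\lambda>0$, which proves the second claim.

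The main obstacle is recognizing the factorizations $K_\lambda=(E^{-1}+(\lambda G)^{-1})^{-1}$ and $W_\lambda=K_\lambda E^{-1}$. Naive manipulations of $\lambda GMEME$-type expressions do not reveal symmetry or definiteness. If the Woodbury verification proves awkward, a fallback is to verify $K_\lambda(E^{-1}+S^{-1})=I$ by multiplying out and using $E^{-1}S+I=E^{-1}(E+S)$.
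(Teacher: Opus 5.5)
Your proof is correct, and it takes a different route from the paper. The paper whitens by $E^{-1/2}$. With $L=E^{-1/2}GE^{-1/2}$ and $M_\lambda=\lambda L(I+\lambda L)^{-1}$, it shows $K_\lambda=E^{1/2}M_\lambda E^{1/2}$ and $W_\lambda EW_\lambda^\top=E^{1/2}M_\lambda^2E^{1/2}$. Hence $\Delta_\lambda=E^{1/2}(M_\lambda-M_\lambda^2)E^{1/2}$, and nonnegativity follows from the eigenvalues $m_j(1-m_j)$ with $m_j\in(0,1)$.

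You instead use the parallel-sum identity $K_\lambda=\{E^{-1}+(\lambda G)^{-1}\}^{-1}$ together with $W_\lambda=K_\lambda E^{-1}$. This turns $\Delta_\lambda$ into the congruence $\lambda^{-1}K_\lambda G^{-1}K_\lambda$, with no matrix square roots or spectral decomposition. Your expressions agree with the paper's: for example, both give $\Delta_\lambda=E^{1/2}\lambda L(I+\lambda L)^{-2}E^{1/2}$.

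Each form makes different facts immediate:
\begin{itemize}
\item \emph{Your form.} From $K_\lambda^{-1}=E^{-1}+(\lambda G)^{-1}$ you get $K_\lambda\preceq E$ and monotonicity of $K_\lambda$ in $\lambda$ at once. The large-penalty limits $K_\lambda\to E$, $W_\lambda\to I$ and $\Delta_\lambda\to0$ used in Appendix~B can also be read off directly.
\item \emph{The paper's form.} Its spectral representation shows how much borrowing occurs in each generalized eigendirection of the pair $(G,E)$. It also yields bounds such as $\Delta_\lambda\preceq E/4$, since $m(1-m)\le 1/4$.
\end{itemize}

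One presentational fix: the sentence containing ``$E^{-1}SMS=E^{-1}S-SMS\cdot 0+\dots$'' is garbled. The clean direct check is
\[
(E^{-1}+S^{-1})(S-SMS)=(E^{-1}S+I)(I-MS)=E^{-1}(E+S)-E^{-1}S=I.
\]
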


The next theorem gives the covariance comparison for the survival
regression parameter.

\begin{theorem}
\label{thm:efficiency}
Under Assumptions~\ref{ass:sampling}--\ref{ass:quad} and
$f_{Y0}=f_{T0}$,
\begin{equation}
V_\lambda
=
P_\lambda^{-1}
-
P_\lambda^{-1}CH_\lambda^{-1}
\Delta_\lambda H_\lambda^{-1}C^\top P_\lambda^{-1}.
\label{eq:Vclosed}
\end{equation}
For every fixed $\lambda\ge0$,
\begin{equation}
0\preceq V_\lambda\preceq V_0,
\label{eq:Vorder}
\end{equation}
where
\begin{equation}
V_0=(A-CD^{-1}C^\top)^{-1}.
\label{eq:V0}
\end{equation}
\end{theorem}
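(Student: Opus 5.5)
We work with the reduced two-block system of Lemma~\ref{lem:eliminate}. The first step is to argue that the $(\beta,\beta)$ block of the full sandwich $\mathcal A_\lambda^{-1}\mathcal B\mathcal A_\lambda^{-T}$ from Theorem~\ref{thm:rootfixed} coincides with the $(\beta,\beta)$ block of $\widetilde{\mathcal A}_\lambda^{-1}\widetilde{\mathcal B}_\lambda\widetilde{\mathcal A}_\lambda^{-1}$. The third linearized equation gives $\theta_Y$-deviation $=(E+\lambda G)^{-1}(\lambda G\,\delta_T + s_Y)$. Substituting this into the second equation produces exactly the derivative $H_\lambda$ and the score $s_T+W_\lambda s_Y$, because $W_\lambda=\lambda G(E+\lambda G)^{-1}$. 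Hence the $(\beta,\theta_T)$ components of the linear representation \eqref{eq:rootfull} equal $\widetilde{\mathcal A}_\lambda^{-1}\widetilde s_\lambda$, and both $\widetilde{\mathcal A}_\lambda$ and $H_\lambda$ are symmetric.

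The second step is to compute the sandwich. Write $\widetilde{\mathcal B}_\lambda=\widetilde{\mathcal A}_\lambda-\operatorname{diag}(0,\Delta_\lambda)$, using $J_\lambda=H_\lambda-\Delta_\lambda$. Then
\[
\widetilde{\mathcal A}_\lambda^{-1}\widetilde{\mathcal B}_\lambda\widetilde{\mathcal A}_\lambda^{-1}
=\widetilde{\mathcal A}_\lambda^{-1}-\widetilde{\mathcal A}_\lambda^{-1}\begin{pmatrix}0&0\\0&\Delta_\lambda\end{pmatrix}\widetilde{\mathcal A}_\lambda^{-1}.
\]
By the block-inverse formula with Schur complement $P_\lambda=A-CH_\lambda^{-1}C^\top$, the $(\beta,\beta)$ block of $\widetilde{\mathcal A}_\lambda^{-1}$ is $P_\lambda^{-1}$ and its $(\beta,\theta_T)$ block is $-P_\lambda^{-1}CH_\lambda^{-1}$. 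This yields \eqref{eq:Vclosed} immediately. Invertibility of $H_\lambda$ follows from $D\succ0$ (as a principal block of $\mathcal I_T\succ0$) and $K_\lambda\succeq0$ (Lemma~\ref{lem:psd}). Positivity of $P_\lambda$ follows from $A-CH_\lambda^{-1}C^\top\succeq A-CD^{-1}C^\top\succ0$, since $H_\lambda\succeq D$ implies $H_\lambda^{-1}\preceq D^{-1}$. At $\lambda=0$ we have $K_0=0$ and $W_0=0$, so $\Delta_0=0$ and $V_0=(A-CD^{-1}C^\top)^{-1}$.

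The third step is the ordering. The lower bound $V_\lambda\succeq0$ is automatic, since $V_\lambda$ is a block of a covariance matrix. For the upper bound, \eqref{eq:Vclosed} together with $\Delta_\lambda\succeq0$ (Lemma~\ref{lem:psd}) gives $V_\lambda\preceq P_\lambda^{-1}$. It then remains to show $P_\lambda^{-1}\preceq V_0$, which is equivalent to $P_\lambda\succeq A-CD^{-1}C^\top$. That reduces to $H_\lambda^{-1}\preceq D^{-1}$, which we already established from $H_\lambda=D+K_\lambda\succeq D\succ0$ via operator monotonicity of inversion. Combining the pieces gives $V_\lambda\preceq P_\lambda^{-1}\preceq V_0$.

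The main obstacle is the content of Lemma~\ref{lem:psd}, which we take as given. The step requiring the most care is verifying that the elimination preserves the $\beta$-marginal of the sandwich exactly, rather than just the derivative matrix. Concretely, one must check that the covariance of $s_T+W_\lambda s_Y$ is $D+W_\lambda E W_\lambda^\top$ and that its covariance with $s_\beta$ is $C^\top$. Both facts rest on Assumption~\ref{ass:crossorth} and Assumption~\ref{ass:identity}, and they are precisely what makes $\widetilde{\mathcal B}_\lambda$ differ from $\widetilde{\mathcal A}_\lambda$ only in the lower-right block.
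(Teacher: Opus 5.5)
Your proposal is correct and follows the paper's proof essentially step for step. You reduce to the two-block system of Lemma~\ref{lem:eliminate}, write $\widetilde{\mathcal B}_\lambda=\widetilde{\mathcal A}_\lambda-\mathrm{diag}(0,\Delta_\lambda)$, read off the closed form for $V_\lambda$ from the block inverse, and then chain $V_\lambda\preceq P_\lambda^{-1}\preceq P_0^{-1}=V_0$ using Lemma~\ref{lem:psd} and order reversal under inversion. The only minor difference is how $P_\lambda\succ0$ is obtained: you compare it with the Schur complement $A-CD^{-1}C^\top$ of $\mathcal I_T$, whereas the paper deduces it from positive definiteness of $\widetilde{\mathcal A}_\lambda$, and both routes are valid.
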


\begin{corollary}
\label{cor:strict}
Under the conditions of Theorem~\ref{thm:efficiency}, let
$a\in\R^p\setminus\{0\}$. If either
\begin{equation}
a^\top(P_0^{-1}-P_\lambda^{-1})a>0
\label{eq:strict1}
\end{equation}
or
\begin{equation}
\Delta_\lambda^{1/2}H_\lambda^{-1}C^\top P_\lambda^{-1}a\neq0,
\label{eq:strict2}
\end{equation}
then $a^\top V_\lambda a<a^\top V_0a$. If $C=0$, then
$V_\lambda=V_0$ for every $\lambda\ge0$.
\end{corollary}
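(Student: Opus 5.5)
The plan is to start from the closed form \eqref{eq:Vclosed} in Theorem~\ref{thm:efficiency} and write
\[
V_0-V_\lambda=(P_0^{-1}-P_\lambda^{-1})+P_\lambda^{-1}CH_\lambda^{-1}\Delta_\lambda H_\lambda^{-1}C^\top P_\lambda^{-1}.
\]
Here $P_0=A-CD^{-1}C^\top$, since $H_0=D$ (because $K_0=0$), so that $V_0=P_0^{-1}$. The point is to show that both summands are positive semidefinite. Then $a^\top(V_0-V_\lambda)a$ is a sum of two nonnegative numbers, and it is strictly positive as soon as either one is strictly positive.

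\textbf{First summand.} By Lemma~\ref{lem:psd}, $K_\lambda\succeq0$, so $H_\lambda=D+K_\lambda\succeq D\succ0$. Operator monotonicity of inversion then gives $H_\lambda^{-1}\preceq D^{-1}$. Hence $CH_\lambda^{-1}C^\top\preceq CD^{-1}C^\top$, and so $P_\lambda\succeq P_0$.

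We also need $P_0\succ0$. This holds because $P_0$ is the Schur complement of the positive definite $\mathcal I_T$ (Assumption~\ref{ass:pd}). It follows that $P_\lambda\succeq P_0\succ0$, and inverting once more gives $P_\lambda^{-1}\preceq P_0^{-1}$. Therefore $a^\top(P_0^{-1}-P_\lambda^{-1})a\ge0$, with strict inequality exactly under \eqref{eq:strict1}.

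\textbf{Second summand.} Set $u=H_\lambda^{-1}C^\top P_\lambda^{-1}a$. Since $\Delta_\lambda\succeq0$ by Lemma~\ref{lem:psd}, its symmetric square root exists, and the quadratic form in $a$ equals $u^\top\Delta_\lambda u=\|\Delta_\lambda^{1/2}u\|^2\ge0$. This is strictly positive precisely when \eqref{eq:strict2} holds.

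Adding the two pieces, either condition \eqref{eq:strict1} or \eqref{eq:strict2} yields $a^\top V_\lambda a<a^\top V_0a$.

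\textbf{The case $C=0$.} Here $P_\lambda=A=P_0$ for every $\lambda\ge0$, and the correction term in \eqref{eq:Vclosed} contains the factor $C$, so it vanishes. Hence $V_\lambda=A^{-1}=V_0$.

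\textbf{Main obstacle.} The only delicate point is making sure that all inverses exist, so that the Löwner-order inversion steps are legitimate. Positive definiteness of $H_\lambda$ follows from $D\succ0$ and $K_\lambda\succeq0$, and positive definiteness of $P_0$ and $P_\lambda$ follows from the Schur-complement argument above. Everything else is a direct rearrangement of Theorem~\ref{thm:efficiency}.
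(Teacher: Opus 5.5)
Your proposal is correct and follows essentially the same route as the paper. Both arguments write $V_0-V_\lambda$ as the sum of $P_0^{-1}-P_\lambda^{-1}\succeq0$ and the correction term from \eqref{eq:Vclosed}, whose quadratic form is $\|\Delta_\lambda^{1/2}H_\lambda^{-1}C^\top P_\lambda^{-1}a\|^2$, so either condition forces strict inequality, and $C=0$ gives $V_\lambda=A^{-1}=V_0$. The only difference is that you re-derive the Loewner chain $P_\lambda^{-1}\preceq P_0^{-1}$ and the invertibility facts, whereas the paper cites them from the proof of Theorem~\ref{thm:efficiency}.
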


The auxiliary outcome affects inference for $\beta_0$ through the
cross-information matrix $C$. The matrix $K_\lambda$ increases the
curvature of the survival nuisance equation, while the term involving
$\Delta_\lambda$ accounts for the difference between the derivative and
score covariance matrices. Both contribute to the ordering in
\eqref{eq:Vorder}. When $C=0$, improved estimation of the nuisance
function has no first-order effect on the variance of the target
estimator. The ordering depends on the information identities and
score orthogonality; a general sandwich covariance need not satisfy it.

\subsection{Differences between outcome-specific effects}
\label{sec:local}

\subsubsection{Local differences}

Keep $\beta_0$ fixed and consider a sequence of models in which
$\theta_{Y0,n}$ and $\theta_{T0,n}$ converge to a common value $\theta_0$,
with
\begin{equation}
\theta_{Y0,n}-\theta_{T0,n}=n^{-1/2}h,
\qquad h\in\R^K\text{ fixed}.
\label{eq:localtheta}
\end{equation}
The corresponding function difference is
$n^{-1/2}b_K(x)^\top h$. In this subsection, the information matrices
are evaluated at the common limiting model. We assume that the
population derivative and score covariance matrices along the sequence
converge to those at this limiting model.

\begin{theorem}
\label{thm:local}
Suppose Assumptions~\ref{ass:sampling}--\ref{ass:quad} hold uniformly
along \eqref{eq:localtheta}. For each fixed $\lambda\ge0$,
\begin{equation}
\sqrt n(\widehat\beta_\lambda-\beta_0)
\rightsquigarrow
N\{B_\lambda(h),V_\lambda\},
\label{eq:locallimit}
\end{equation}
where $V_\lambda$ is given by \eqref{eq:Vclosed} and
\begin{equation}
B_\lambda(h)
=-P_\lambda^{-1}CH_\lambda^{-1}K_\lambda h.
\label{eq:Blambda}
\end{equation}
In particular, $B_0(h)=0$.
\end{theorem}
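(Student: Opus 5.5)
The plan is to redo the first-order expansion of Theorem~\ref{thm:rootfixed}, this time around the sequence of true parameters $\vartheta_{0,n}=(\beta_0^\top,\theta_{T0,n}^\top,\theta_{Y0,n}^\top)^\top$. The key difference from the equal-effect case is that the penalty gradient no longer vanishes there.

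\textbf{Step 1: the penalty gradient at the truth.} At $\vartheta_{0,n}$ the gradient of $\frac{\lambda}{2}(\theta_Y-\theta_T)^\top G_n(\theta_Y-\theta_T)$ is
\[
\lambda G_n(\theta_{Y0,n}-\theta_{T0,n})\cdot(0,-1,+1),
\]
that is, $\lambda G_n n^{-1/2}h$ in the $\theta_Y$ block, its negative in the $\theta_T$ block, and zero in the $\beta$ block. Since $G_n\to G$ in probability, multiplying by $\sqrt n$ gives the deterministic shift $\lambda(0,-Gh,Gh)$ up to $o_p(1)$.

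\textbf{Step 2: expansion and CLT.} The unpenalized scores at $\vartheta_{0,n}$ are the endpoint-specific scores at their own truths, so they have mean zero. By the assumed uniform CLT along the sequence, together with convergence of the covariance to $\mathcal B$, the normalized unpenalized score converges to $N(0,\mathcal B)$. Using Assumption~\ref{ass:quad} uniformly, with the derivative matrix converging to $\mathcal A_\lambda$, we obtain
\[
\sqrt n(\widehat\vartheta_\lambda-\vartheta_{0,n})=\mathcal A_\lambda^{-1}\Big\{n^{-1/2}\textstyle\sum_i s(O_i)+\lambda(0,Gh,-Gh)\Big\}+o_p(1).
\]
The sign flip comes from moving the gradient to the other side of the equation. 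Consistency for the moving target needs care, but it is covered by the uniformity hypothesis. The random part reproduces $V_\lambda$ exactly as in Theorem~\ref{thm:efficiency}.

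\textbf{Step 3: computing the mean.} This is the main algebraic step. I would use the elimination of Lemma~\ref{lem:eliminate} on the deterministic vector $c=(0,\,c_T,\,c_Y)$ with $c_T=\lambda Gh$ and $c_Y=-\lambda Gh$. Solving the $\theta_Y$ row gives $\theta_Y=(E+\lambda G)^{-1}(\lambda G\theta_T+c_Y)$. Substituting this into the $\theta_T$ row gives the reduced right-hand side
\[
c_T+W_\lambda c_Y=\lambda Gh-\lambda G(E+\lambda G)^{-1}\lambda Gh=K_\lambda h,
\]
with reduced matrix $\widetilde{\mathcal A}_\lambda$. By block inversion, the $\beta$ component of $\widetilde{\mathcal A}_\lambda^{-1}(0,u)$ is $-P_\lambda^{-1}CH_\lambda^{-1}u$. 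Setting $u=K_\lambda h$ gives $B_\lambda(h)$ as in \eqref{eq:Blambda}. When $\lambda=0$ we have $K_0=0$, so $B_0(h)=0$. The main obstacle is getting the sign conventions in Steps 1 and 3 right, and justifying the uniform remainder; I would check the sign on the scalar case $K=p=1$.
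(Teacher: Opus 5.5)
Your proposal is correct and matches the paper's proof: expand around the moving truth $\vartheta_{0,n}$, isolate the deterministic penalty gradient $\lambda(0,-Gh,Gh)$, and solve $\mathcal A_\lambda b=(0,\lambda Gh,-\lambda Gh)$ by eliminating the $\theta_Y$ block, which reduces the right-hand side to $K_\lambda h$. The only cosmetic difference is that you read off $b_\beta$ from the block-inverse formula for $\widetilde{\mathcal A}_\lambda^{-1}$ used in the proof of Theorem~\ref{thm:efficiency}, whereas the paper solves for $b_T$ and substitutes directly; both give $-P_\lambda^{-1}CH_\lambda^{-1}K_\lambda h$.
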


\begin{corollary}
For a scalar target $a^\top\beta_0$, with $a\in\R^p$, the local
asymptotic mean squared error, defined from the limiting distribution
in \eqref{eq:locallimit}, is
\begin{equation}
\operatorname{AMSE}_\lambda(a,h)
=a^\top V_\lambda a+\{a^\top B_\lambda(h)\}^2.
\label{eq:scalaramse}
\end{equation}
The corresponding matrix for the vector parameter is
\begin{equation}
\operatorname{AMSE}_\lambda(h)
=V_\lambda+B_\lambda(h)B_\lambda(h)^\top.
\label{eq:matrixamse}
\end{equation}
\end{corollary}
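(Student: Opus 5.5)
The corollary follows directly from Theorem~\ref{thm:local}. By that theorem, $U_n=\sqrt n(\widehat\beta_\lambda-\beta_0)$ converges in distribution to $U\sim N\{B_\lambda(h),V_\lambda\}$. The local AMSE is defined from this limit law, not as the limit of finite-sample second moments. So we set $\operatorname{AMSE}_\lambda(h)=\E(UU^\top)$ and $\operatorname{AMSE}_\lambda(a,h)=\E\{(a^\top U)^2\}$. With this definition, the argument is a bias--variance decomposition of the second moment of a Gaussian vector, and no uniform integrability of $U_n$ is needed.

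\textbf{Vector case.} Write $U=B_\lambda(h)+\xi$, where $\xi\sim N(0,V_\lambda)$. Expanding gives
\[
\E(UU^\top)=B_\lambda(h)B_\lambda(h)^\top+B_\lambda(h)\E(\xi)^\top+\E(\xi)B_\lambda(h)^\top+\E(\xi\xi^\top).
\]
The two cross terms vanish because $\E(\xi)=0$. The last term equals $V_\lambda$. This gives \eqref{eq:matrixamse}.

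\textbf{Scalar case.} By the continuous mapping theorem, $a^\top U_n\rightsquigarrow a^\top U$. The limit $a^\top U$ is $N\{a^\top B_\lambda(h),a^\top V_\lambda a\}$, so its second moment is its variance plus its squared mean. Equivalently, pre- and post-multiply \eqref{eq:matrixamse} by $a^\top$ and $a$. Either route gives \eqref{eq:scalaramse}.

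\textbf{Main obstacle.} The only potential difficulty is interpretive: whether the AMSE means the limit of $n\,\E\{(\widehat\beta_\lambda-\beta_0)(\widehat\beta_\lambda-\beta_0)^\top\}$. That would require additional uniform integrability. The corollary explicitly defines the AMSE from the limiting distribution, so this issue does not arise. The remaining facts used are that $B_\lambda(h)$ and $V_\lambda$ are finite, which holds by Assumption~\ref{ass:pd} and the invertibility of $P_\lambda$ and $H_\lambda$ used in Theorem~\ref{thm:efficiency}. At $\lambda=0$, the identity $B_0(h)=0$ gives $\operatorname{AMSE}_0(h)=V_0$.
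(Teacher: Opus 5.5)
Your proposal is correct and matches the paper, which gives no separate proof because the corollary follows immediately from Theorem~\ref{thm:local}: the AMSE is the second moment of the limiting $N\{B_\lambda(h),V_\lambda\}$ law, and that second moment is the covariance plus the outer product of the mean. You also note that defining the AMSE from the limit law avoids any uniform-integrability requirement, which is the reading the statement intends.
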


The mean shift depends on the direction of the difference, not only on
its size. In particular, a direction $h$ satisfying
$CH_\lambda^{-1}K_\lambda h=0$ produces no first-order bias in
$\widehat\beta_\lambda$. Differences of equal magnitude can therefore
have different consequences for inference on $\beta_0$.

\subsubsection{Fixed differences}
\label{sec:fixed}

Now suppose that $\theta_{Y0}-\theta_{T0}\neq0$ is fixed.
Proposition~\ref{prop:fixedpop} shows why a fixed positive penalty can
change the population target. The following result describes the target
as the penalty approaches zero.

\begin{proposition}
\label{prop:vanish}
Suppose the unpenalized population criterion is twice continuously
differentiable near $\vartheta_0$, its Hessian at $\vartheta_0$ is
nonsingular, and $\theta_{Y0}-\theta_{T0}\neq0$ is fixed. Let
$\vartheta_\lambda$ denote the nearby population minimizer under penalty
level $\lambda$. Then, as $\lambda\downarrow0$,
\begin{equation}
\vartheta_\lambda-\vartheta_0=O(\lambda).
\label{eq:poptargetrate}
\end{equation}
If $\lambda_n\to0$ and the estimator is consistent for
$\vartheta_{\lambda_n}$, it is also consistent for $\vartheta_0$.
If, in addition, $\sqrt n\,\lambda_n\to0$ and the stochastic expansion
is uniform for sufficiently small $\lambda$, then
$\widehat\beta_{\lambda_n}$ has the same first-order limiting
distribution as the separate estimator $\widehat\beta_0$.
\end{proposition}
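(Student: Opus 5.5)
The plan is to treat the penalized population minimizer as a smooth perturbation of the unpenalized one, using the implicit function theorem for the first claim, and then to compare the estimator with the separate estimator through a uniform stochastic expansion.

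\textbf{Step 1: population target rate.} Write $M(\vartheta)$ for the profiled unpenalized population criterion. Let
\[
\Pi(\vartheta)=\tfrac12(\theta_Y-\theta_T)^\top G(\theta_Y-\theta_T)
\]
be the population penalty, so the penalized criterion is $M+\lambda\Pi$. Define the gradient map
\[
\Phi(\vartheta,\lambda)=\nabla M(\vartheta)+\lambda\nabla\Pi(\vartheta).
\]
Because $\vartheta_0$ is an interior minimizer of the unpenalized criterion, $\Phi(\vartheta_0,0)=0$. The map $\Phi$ is $C^1$ jointly in $(\vartheta,\lambda)$, since $M$ is $C^2$ and $\Pi$ is quadratic. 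Its $\vartheta$-Jacobian at $(\vartheta_0,0)$ is $\nabla^2M(\vartheta_0)$, which is nonsingular by hypothesis. The implicit function theorem then gives a unique $C^1$ curve $\lambda\mapsto\vartheta_\lambda$ near $\vartheta_0$ with $\Phi(\vartheta_\lambda,\lambda)=0$. Differentiating at $\lambda=0$ gives
\[
\frac{d\vartheta_\lambda}{d\lambda}\Big|_{\lambda=0}=-\{\nabla^2M(\vartheta_0)\}^{-1}\nabla\Pi(\vartheta_0),
\]
where $\nabla\Pi(\vartheta_0)=(0,\,-G(\theta_{Y0}-\theta_{T0}),\,G(\theta_{Y0}-\theta_{T0}))$. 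Hence $\vartheta_\lambda-\vartheta_0=O(\lambda)$. If $\nabla^2M(\vartheta_0)$ is positive definite, $\vartheta_\lambda$ is also a local minimizer for small $\lambda$, so it is the ``nearby population minimizer'' in the statement. This identification is the main subtlety in the first claim.

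\textbf{Step 2: consistency.} By the triangle inequality,
\[
\|\widehat\vartheta_{\lambda_n}-\vartheta_0\|\le\|\widehat\vartheta_{\lambda_n}-\vartheta_{\lambda_n}\|+\|\vartheta_{\lambda_n}-\vartheta_0\|.
\]
The first term is $o_p(1)$ by assumption, and the second is $O(\lambda_n)=o(1)$ by Step 1.

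\textbf{Step 3: limiting distribution.} Use the uniform version of the expansion in Assumption~\ref{ass:quad} around $\vartheta_{\lambda_n}$:
\[
\sqrt n(\widehat\vartheta_{\lambda_n}-\vartheta_{\lambda_n})=-\mathcal A(\lambda_n)^{-1}\sqrt n\,\Pn\psi_{\lambda_n}+o_p(1).
\]
Here $\psi_\lambda$ is the penalized estimating function evaluated at $\vartheta_\lambda$, and $\mathcal A(\lambda)$ is the corresponding derivative matrix. Now decompose:
\begin{enumerate}[label=(\roman*)]
\item The population-centering bias satisfies $\sqrt n(\vartheta_{\lambda_n}-\vartheta_0)=O(\sqrt n\lambda_n)=o(1)$.
\item By continuity, $\mathcal A(\lambda_n)\to\mathcal A_0$, the unpenalized derivative matrix.
\item The difference between $\sqrt n\,\Pn\psi_{\lambda_n}$ and the unpenalized score at $\vartheta_0$ splits into two parts. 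The first is the penalty term $\lambda_n\sqrt n\,G_n(\theta_Y-\theta_T)$ evaluated at $\vartheta_{\lambda_n}$. It is $O_p(\sqrt n\lambda_n)=o_p(1)$, because $G_n=G+o_p(1)$ and $\theta_Y-\theta_T$ stays bounded. The second is the shift of the unpenalized empirical score from $\vartheta_0$ to $\vartheta_{\lambda_n}$. I will absorb this shift into the uniform expansion: by stochastic equicontinuity and the $O(\lambda_n)$ distance, it equals its population counterpart plus $o_p(1)$, and that population shift cancels term (i).
\end{enumerate}
Combining these, $\sqrt n(\widehat\vartheta_{\lambda_n}-\vartheta_0)$ equals the $\lambda=0$ linear representation from Theorem~\ref{thm:rootfixed} up to $o_p(1)$. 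Note that Theorem~\ref{thm:rootfixed}'s expansion at $\lambda=0$ does not require equal effects, since the penalty is absent. Extracting the $\beta$ block gives the same limit as $\widehat\beta_0$.

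\textbf{Main obstacle.} The hardest step is the uniformity bookkeeping in (iii), namely justifying that the remainder is $o_p(1)$ uniformly over $\lambda\in[0,\lambda_n]$. I would invoke the stated uniform expansion hypothesis directly rather than derive it. I would also note that $\sqrt n\lambda_n\to0$ is exactly what kills the non-vanishing penalty gradient $\lambda G(\theta_{Y0}-\theta_{T0})$, which does not vanish here, unlike the equal-effect case.
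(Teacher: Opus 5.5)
Your proposal is correct and follows the paper's proof. Both write the first-order condition $\Psi_0(\vartheta_\lambda)+\lambda p(\vartheta_\lambda)=0$ and obtain $\vartheta_\lambda-\vartheta_0=-\lambda H_0^{-1}p(\vartheta_0)+o(\lambda)$, use the triangle inequality for consistency, and recenter the uniform expansion at $\vartheta_0$ using $\sqrt n\lambda_n\to0$. Your Step 3 spells out the bookkeeping (penalty gradient, score shift, centering bias) that the paper compresses into one sentence, and your implicit-function-theorem route supplies the convergence $\vartheta_\lambda\to\vartheta_0$ that the paper's Taylor step tacitly assumes. Positive definiteness of the Hessian is also automatic, since a minimizer with nonsingular Hessian must have a positive-definite one.
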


\subsection{Selection of the penalty parameter}
\label{sec:adaptive}

The preceding results suggest retaining a positive penalty when the
two covariate functions agree and reducing it when their difference
persists. Let $\widehat f_Y^{\mathrm{sep}}$ and
$\widehat f_T^{\mathrm{sep}}$ be obtained from separate fits and define
\begin{equation}
\widehat D_n
=\norm{\widehat f_Y^{\mathrm{sep}}-\widehat f_T^{\mathrm{sep}}}_n.
\label{eq:Dhat}
\end{equation}
Under equal covariate effects, suppose
\begin{equation}
\widehat D_n=O_p(r_n),
\qquad r_n\downarrow0.
\label{eq:Drate}
\end{equation}
Choose a deterministic threshold $\tau_n\downarrow0$ satisfying
\begin{equation}
r_n=o(\tau_n).
\label{eq:taurate}
\end{equation}
Let $\omega:[0,\infty)\to[0,1]$ be continuous, equal to one in a
neighborhood of zero, and equal to zero for $u\ge1$. For a prespecified
$\lambda_+>0$, set
\begin{equation}
\widehat\lambda_n
=\lambda_+\,\omega(\widehat D_n/\tau_n).
\label{eq:lambdahat}
\end{equation}

\begin{theorem}
\label{thm:adaptive}
Assume that the separate pilot estimators are consistent in $L_2(P_X)$
and also satisfy
\[
\norm{\widehat f_Y^{\mathrm{sep}}-f_{Y0}}_n
+
\norm{\widehat f_T^{\mathrm{sep}}-f_{T0}}_n
=o_p(1).
\]
Suppose \eqref{eq:Drate}--\eqref{eq:taurate} hold under equality and
that the fixed-difference function is square-integrable. Then:
\begin{enumerate}[label=(\roman*)]
\item If $f_{Y0}=f_{T0}$, then
$\widehat\lambda_n\to_p\lambda_+$.
\item If $D_0=\norm{f_{Y0}-f_{T0}}_{P_X}>0$ is fixed, then
$\widehat\lambda_n\to_p0$.
\item If the first-order expansion of $\widehat\beta_\lambda$ is
stochastically equicontinuous in $\lambda$ near $\lambda_+$ and zero,
then
\[
\sqrt n\bigl(
\widehat\beta_{\widehat\lambda_n}-\widehat\beta_{\lambda_+}
\bigr)=o_p(1)
\]
under equality, and
\[
\sqrt n\bigl(
\widehat\beta_{\widehat\lambda_n}-\widehat\beta_0
\bigr)=o_p(1)
\]
under a fixed difference.
\end{enumerate}
\end{theorem}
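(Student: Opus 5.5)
The argument has three parts. Parts (i) and (ii) concern the behavior of the data-driven penalty. Part (iii) transfers that behavior to the estimator through the assumed stochastic equicontinuity.

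For (i), assume $f_{Y0}=f_{T0}$. By \eqref{eq:Drate}, $\widehat D_n=O_p(r_n)$, and by \eqref{eq:taurate}, $\widehat D_n/\tau_n=O_p(r_n/\tau_n)=o_p(1)$. Since $\omega$ equals one on some interval $[0,u_0)$ with $u_0>0$,
\[
\Pp(\widehat\lambda_n\neq\lambda_+)\le\Pp(\widehat D_n/\tau_n\ge u_0)\to0 .
\]
This gives more than convergence in probability: the event $\{\widehat\lambda_n=\lambda_+\}$ has probability tending to one, and this stronger form is what (iii) will use.

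For (ii), write $d_0=f_{Y0}-f_{T0}$. The triangle inequality for $\norm{\cdot}_n$ gives
\[
\bigl|\widehat D_n-\norm{d_0}_n\bigr|
\le
\norm{\widehat f_Y^{\mathrm{sep}}-f_{Y0}}_n+\norm{\widehat f_T^{\mathrm{sep}}-f_{T0}}_n=o_p(1).
\]
Because $d_0$ is square-integrable, the weak law of large numbers gives $\norm{d_0}_n^2\to_p\norm{d_0}_{P_X}^2$, so $\norm{d_0}_n\to_p D_0>0$. Hence $\widehat D_n\to_p D_0$. Since $\tau_n\downarrow0$, we get $\widehat D_n/\tau_n\to_p\infty$, and in particular $\Pp(\widehat D_n/\tau_n\ge1)\to1$. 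On that event $\omega=0$, so $\Pp(\widehat\lambda_n=0)\to1$.

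For (iii), I would use the fact that both limits were obtained as events of probability tending to one, not merely as convergence in probability. Under equality, on the event $\{\widehat\lambda_n=\lambda_+\}$ the two estimators coincide, $\widehat\beta_{\widehat\lambda_n}=\widehat\beta_{\lambda_+}$. Therefore, for any $\epsilon>0$,
\[
\Pp\bigl(\sqrt n\,|\widehat\beta_{\widehat\lambda_n}-\widehat\beta_{\lambda_+}|>\epsilon\bigr)\le\Pp(\widehat\lambda_n\neq\lambda_+)\to0 .
\]
The fixed-difference case is identical, using the event $\{\widehat\lambda_n=0\}$ from (ii) and comparing with $\widehat\beta_0$.

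Written this way, equicontinuity is not strictly needed. I would still state it as the general route: if $\omega$ were only continuous and did not produce exact values with probability tending to one, one would instead use $\sup_{|\lambda-\lambda_+|\le\delta_n}\sqrt n|\widehat\beta_\lambda-\widehat\beta_{\lambda_+}|=o_p(1)$ for every $\delta_n\to0$, combined with $\widehat\lambda_n\to_p\lambda_+$.

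The main obstacle is part (ii), specifically replacing $\norm{d_0}_{P_X}$ by its empirical version. This is where square-integrability of $d_0$ is needed. It is also why the pilot-consistency condition is stated in the empirical norm $\norm{\cdot}_n$ rather than only in $L_2(P_X)$: the pilot estimators are data-dependent, so consistency in $L_2(P_X)$ does not by itself control their empirical norms.
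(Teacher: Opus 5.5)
Your proposal is correct and follows the paper's proof essentially step for step: the flat regions of $\omega$ turn $\widehat D_n/\tau_n=o_p(1)$ and $\widehat D_n/\tau_n\to_p\infty$ into exact selection events of probability tending to one, (ii) uses the triangle inequality plus the law of large numbers for $\norm{d_0}_n$, and (iii) follows on those events, which makes the equicontinuity hypothesis superfluous. The only detail the paper adds is to fix the same measurable rule for choosing a minimizer at each penalty level, so that $\widehat\beta_{\widehat\lambda_n}$ and $\widehat\beta_{\lambda_+}$ (or $\widehat\beta_0$) literally coincide on the selection event even when the minimizer is not unique.
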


Along the local sequence \eqref{eq:localtheta}, suppose
$\widehat D_n=O_p(n^{-1/2}+r_n)$. If
$(n^{-1/2}+r_n)/\tau_n\to0$, then
$\widehat\lambda_n\to_p\lambda_+$. The procedure therefore retains the
local mean shift in Theorem~\ref{thm:local}. The conclusions of
Theorem~\ref{thm:adaptive} are pointwise statements under equality and
fixed differences, not a guarantee of uniform validity over shrinking
alternatives.

For comparison, the penalty minimizing the local asymptotic mean
squared error for a specified target $a^\top\beta_0$ satisfies
\begin{equation}
\lambda^*(a,h)
\in
\argmin_{\lambda\in\Lambda}
\left[
a^\top V_\lambda a+\{a^\top B_\lambda(h)\}^2
\right],
\label{eq:oracle}
\end{equation}
where $\Lambda\subset[0,\infty)$ is a compact set containing zero.
Because $h$ is unknown, \eqref{eq:oracle} is a theoretical benchmark
rather than an implementable selection rule. Predictive
cross-validation addresses a different objective and need not select
the penalty most suitable for inference on $\beta_0$.

\subsection{Growing-sieve extension}
\label{sec:growing}

Let $K=K_n\to\infty$ in \eqref{eq:sieve}. Write
$A_n,C_n,D_n,E_n$ for the population information blocks at dimension
$K_n$, and define
\[
G^{(n)}=\E\{b_{K_n}(X)b_{K_n}(X)^\top\}.
\]
The empirical Gram matrix is still denoted by $G_n$. For local
alternatives, the information blocks are evaluated at the common
equal-effect limit. Subscripts $n$ on $H_{\lambda,n}$, $P_{\lambda,n}$,
$K_{\lambda,n}$, and $V_{\lambda,n}$ indicate that the fixed-sieve
formulas are evaluated using these population blocks. The following
conditions allow the finite-dimensional covariance calculation to pass
to the limit.

\begin{assumption}[Sieve approximation]
\label{ass:sieveapprox}
There are sieve approximations $f_{Y,K_n}$ and $f_{T,K_n}$ such that
\[
\norm{f_{Y,K_n}-f_{Y0}}_{P_X}
+
\norm{f_{T,K_n}-f_{T0}}_{P_X}
=O(a_{K_n}),
\qquad a_{K_n}\to0.
\]
Along local alternatives, the same condition holds for the corresponding
sequence of true functions.
\end{assumption}

\begin{assumption}[Nuisance estimation and profile remainder]
\label{ass:nuisancerate}
For each fixed $\lambda\in\Lambda$, under equality and the local
alternatives considered below, the joint nuisance estimation error is
$O_p(r_n)$ for a sequence $r_n\to0$. The remainder in the profiled
estimating equation for $\beta$ is $o_p(n^{-1/2})$, including the
contribution of sieve approximation. When the estimation remainder is
quadratic in the nuisance error, the rate
\[
r_n=o(n^{-1/4})
\]
is sufficient for that part of the remainder, provided that the
approximation bias in the profiled target equation is also negligible
at the root-$n$ scale.
\end{assumption}

\begin{assumption}[Hessian and score approximation]
\label{ass:uniformblocks}
Uniformly over $\lambda\in\Lambda$:
\begin{enumerate}[label=(\alph*)]
\item the empirical block Hessians converge in operator norm to their
population sieve counterparts;
\item the score for $\beta$, after profiling the nuisance coordinates,
admits an asymptotically linear representation satisfying a central
limit theorem;
\item the inverses and Schur complements in
Theorems~\ref{thm:efficiency} and \ref{thm:local} are uniformly bounded
on the subspaces relevant to $\beta$;
\item replacing $G_n$ by $G^{(n)}$ contributes an $o_p(1)$ term to the
root-$n$ expansion of the target estimator.
\end{enumerate}
\end{assumption}

\begin{theorem}
\label{thm:growing}
Suppose Assumptions~\ref{ass:sampling} and
\ref{ass:sieveapprox}--\ref{ass:uniformblocks} hold, together with
sieve versions of Assumptions~\ref{ass:pd}--\ref{ass:crossorth}.
Suppose also that $V_{\lambda,n}\to V_\lambda$ for each fixed
$\lambda\in\Lambda$. Then:
\begin{enumerate}[label=(\roman*)]
\item Under $f_{Y0}=f_{T0}$,
\[
\sqrt n(\widehat\beta_\lambda-\beta_0)
\rightsquigarrow N(0,V_\lambda).
\]
\item If $V_{\lambda,n}\preceq V_{0,n}$ for every $n$, then
$V_\lambda\preceq V_0$.
\item For a local difference represented in the sieve by
$n^{-1/2}h_n$, with $h_n\in\R^{K_n}$, suppose any representation error
is negligible in the profiled root-$n$ expansion and
\[
B_{\lambda,n}(h_n)
=
-P_{\lambda,n}^{-1}C_nH_{\lambda,n}^{-1}K_{\lambda,n}h_n
\]
converges to a finite vector $B_\lambda$. Then the local limiting
distribution is $N(B_\lambda,V_\lambda)$.
\end{enumerate}
\end{theorem}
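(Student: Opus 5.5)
The proof follows the fixed-sieve argument of Theorems~\ref{thm:rootfixed}--\ref{thm:local}, now run at dimension $K_n$. Assumptions~\ref{ass:sieveapprox}--\ref{ass:uniformblocks} are used to show that the remainders remain $o_p(n^{-1/2})$ and that the finite-dimensional matrix formulas pass to the limit.

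\textbf{Part (i).} We profile out the nuisance coordinates $(\theta_T,\theta_Y)$ at dimension $K_n$, exactly as in Lemma~\ref{lem:eliminate}.

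1. Write the $\beta$-estimating equation after eliminating the nuisance. Its derivative is the Schur complement $P_{\lambda,n}$, and its score contribution is $s_\beta-C_nH_{\lambda,n}^{-1}(s_T+W_{\lambda,n}s_Y)$. In this step we replace the empirical Hessian blocks and $G_n$ by their population versions $A_n,C_n,D_n,E_n,G^{(n)}$. This is justified by Assumption~\ref{ass:uniformblocks}(a),(d). Part (c) of that assumption keeps $P_{\lambda,n}^{-1}$ and $H_{\lambda,n}^{-1}$ bounded on the relevant subspaces, so the replacement errors are $o_p(1)$ after multiplying by $\sqrt n$.

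2. Handle the remainder. It comes from nuisance estimation error and sieve approximation bias. By Assumption~\ref{ass:nuisancerate} it is $o_p(n^{-1/2})$: the quadratic term is $O_p(r_n^2)=o_p(n^{-1/2})$, and the approximation bias of order $a_{K_n}$ is assumed negligible in the profiled target equation.

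3. Conclude with the central limit theorem. Assumption~\ref{ass:uniformblocks}(b) applies to the linear term. By the sieve versions of the information identities and of score orthogonality, the variance of this term is computed exactly as in Theorem~\ref{thm:efficiency}, so it equals $V_{\lambda,n}$. Since $V_{\lambda,n}\to V_\lambda$, Slutsky's lemma gives $N(0,V_\lambda)$.

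Because the dimension grows, the central limit theorem must be applied to the fixed-dimensional ($p$-vector) profiled score, not to the full $K_n$-dimensional score. This is why the proof profiles before taking limits.

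\textbf{Part (ii).} This is immediate. The positive-semidefinite cone is closed, and for every $a$ we have $a^\top(V_{0,n}-V_{\lambda,n})a\ge0$ for all $n$. Passing to the limit gives $a^\top(V_0-V_\lambda)a\ge0$. Here we use $V_{0,n}\to V_0$, which is the case $\lambda=0\in\Lambda$ of the assumed convergence.

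\textbf{Part (iii).} Under the local sequence, the penalty gradient at the truth is no longer zero. In the profiled $\theta$-coordinates it equals $\lambda G_n n^{-1/2}(\pm h_n)$.

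1. Eliminate $\theta_Y$ as in Lemma~\ref{lem:eliminate}. The deterministic term entering the $\theta_T$ equation becomes $n^{-1/2}K_{\lambda,n}h_n$, with sign fixed as in the proof of Theorem~\ref{thm:local}.

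2. Profile out $\theta_T$. This produces the shift $-n^{-1/2}P_{\lambda,n}^{-1}C_nH_{\lambda,n}^{-1}K_{\lambda,n}h_n=n^{-1/2}B_{\lambda,n}(h_n)$ in $\widehat\beta_\lambda-\beta_0$.

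3. Combine with part (i), which holds uniformly along the local sequence since the assumptions are stated for local alternatives. Using $B_{\lambda,n}(h_n)\to B_\lambda$, the limit is $N(B_\lambda,V_\lambda)$.

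\textbf{Main obstacle.} The hardest step is controlling the replacement of $G_n$ by $G^{(n)}$ in the shift term $\lambda(G_n-G^{(n)})h_n$. The vector $h_n$ may have a growing norm, so Assumption~\ref{ass:uniformblocks}(d) must be invoked together with the bounded inverses in (c). I would first bound this term through $\|G_n-G^{(n)}\|_{\mathrm{op}}\,\|K^{1/2}_{\lambda,n}h_n\|$. The quantity $\|K^{1/2}_{\lambda,n}h_n\|$ should be controlled because $B_{\lambda,n}(h_n)$ converges. If this bound fails, I would fall back on the representation-error negligibility hypothesis stated in part (iii).
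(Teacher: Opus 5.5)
Your proposal is correct and follows the paper's own argument. For (i), you profile to the $p$-dimensional target equation using Assumptions~\ref{ass:nuisancerate}--\ref{ass:uniformblocks}, then conclude with the central limit theorem, $V_{\lambda,n}\to V_\lambda$, and Slutsky's theorem. For (ii), you pass $a^\top V_{\lambda,n}a\le a^\top V_{0,n}a$ to the limit, using $0\in\Lambda$. For (iii), you add the profiled penalty shift $B_{\lambda,n}(h_n)\to B_\lambda$ to the centered limit.

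One caution about the bound in your ``main obstacle'' paragraph. Convergence of the $p$-vector $B_{\lambda,n}(h_n)$ gives no control of $\norm{K_{\lambda,n}^{1/2}h_n}$, because the part of $h_n$ annihilated by $C_nH_{\lambda,n}^{-1}K_{\lambda,n}$ is unconstrained. The bound is not needed, however. Assumption~\ref{ass:uniformblocks}(d) directly asserts that replacing $G_n$ by $G^{(n)}$ contributes only $o_p(1)$ to the root-$n$ target expansion, and that is how the paper disposes of this term.
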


\section{Simulation study}
\label{sec:simulation}

\subsection{Design}
We used simulations to evaluate the finite-sample performance of the proposed estimators.  We generated a scalar baseline covariate $X\sim\operatorname{Unif}(-1,1)$ and an independent randomized treatment indicator $Z\sim\operatorname{Bernoulli}(1/2)$. Event times followed
\[
\lambda(t\mid Z,X)=0.08\exp\{\beta_0Z+f_{T0}(X)\},\qquad \beta_0=\log(0.65),
\]
and the auxiliary outcome followed
\[
Y=0.4Z+f_{Y0}(X)+\varepsilon,\qquad \varepsilon\sim N(0,\sigma_Y^2).
\]
Independent exponential censoring was calibrated to give either 30\% or 60\% marginal censoring. The function $f_{T0}$ was represented by a centered cubic B-spline basis with seven linearly independent columns. Its coefficient vector was obtained from the spline projection of $0.95\sin(\pi x)+0.55(x^2-1/3)$ and rescaled so that $\|f_{T0}\|_{L_2(P_X)}\approx0.9$. Outcome differences were introduced through
\[
f_{Y0}(x)=f_{T0}(x)+\delta g(x),\qquad \|g\|_{L_2(P_X)}=1.
\]
For each discrepancy function $g$, let $h_g\in\R^K$ denote its coefficient vector, so that $g(x)=b_K(x)^\top h_g$. To study the role of direction as well as magnitude, one coefficient direction was selected to maximize $|CH_1^{-1}K_1h|$ subject to $h^\top Gh=1$, using high-accuracy Monte Carlo approximations to the population fixed-sieve matrices. A second direction was normalized to satisfy $CH_1^{-1}K_1h=0$ up to numerical precision. Thus two outcome differences can have the same $L_2(P_X)$ magnitude but different first-order effects on the treatment coefficient.

We compared separate estimation, fixed penalties $\lambda\in\{0.25,1,4\}$, complete pooling with $f_Y=f_T$, and the adaptive penalty from Section~\ref{sec:adaptive}. For the adaptive procedure we used $\lambda_+=1$, $\tau_n=1.75n^{-1/4}$, and
\[
\omega(u)=
\begin{cases}
1,&u\le1/2,\\
2(1-u),&1/2<u<1,\\
0,&u\ge1.
\end{cases}
\]
Standard errors were computed from the plug-in sandwich covariance corresponding to the derivative matrix of the penalized estimating equations and the outcome score covariance blocks. The data-generating mechanism satisfies the score-orthogonality condition because the auxiliary error is independent of the event and censoring processes conditional on $(Z,X)$. Each setting used 1000 Monte Carlo replications. We report bias, empirical standard deviation (ESD), average estimated standard error (ASE), empirical coverage of nominal 95\% Wald intervals, root mean squared error (RMSE), and relative efficiency compared with separate estimation.

\subsection{Equal baseline covariate effects}
Table~\ref{tab:common} considers $f_{Y0}=f_{T0}$ with $\sigma_Y=1$. Empirical coverage is close to 95\% for all procedures. Penalized estimation reduces the dispersion of $\widehat\beta$ in every setting, as predicted by Theorem~\ref{thm:efficiency}. The gain is modest with 30\% censoring at $n=600$, but is more visible when survival information is reduced. With 60\% censoring and $n=300$, for example, the relative efficiencies are 1.071 for $\lambda=1$ and 1.081 for complete pooling. The adaptive estimator retains most of the gain without imposing equality in advance.

\begin{table}[ht]
\centering
\caption{Equal baseline covariate effects across outcomes. Bias is multiplied by 100; Cov. is empirical coverage (\%). Relative efficiency is the empirical variance under separate estimation divided by the empirical variance of the indicated estimator.}
\label{tab:common}
\small
\begin{tabular}{rrlrrrrr}
\toprule
$n$ & Cens. (\%) & Method & Bias$\times100$ & ESD & ASE & Cov. & Rel. eff.\\
\midrule
300&30&Separate&-1.50&0.147&0.145&95.1&1.000\\
300&30&Penalty $\lambda=1$&-1.05&0.144&0.142&95.5&1.050\\
300&30&Adaptive&-1.14&0.144&0.143&95.3&1.045\\
300&30&Complete pooling&-0.93&0.143&0.142&95.6&1.058\\
600&30&Separate&0.13&0.100&0.101&95.3&1.000\\
600&30&Penalty $\lambda=1$&0.37&0.099&0.100&95.3&1.016\\
600&30&Adaptive&0.36&0.099&0.100&95.3&1.016\\
600&30&Complete pooling&0.44&0.099&0.099&95.0&1.016\\
300&60&Separate&-1.74&0.201&0.193&94.3&1.000\\
300&60&Penalty $\lambda=1$&-0.90&0.194&0.189&95.0&1.071\\
300&60&Adaptive&-1.35&0.197&0.190&94.8&1.041\\
300&60&Complete pooling&-0.71&0.193&0.189&95.0&1.081\\
600&60&Separate&-0.31&0.134&0.133&94.5&1.000\\
600&60&Penalty $\lambda=1$&0.05&0.132&0.132&95.0&1.031\\
600&60&Adaptive&-0.03&0.133&0.132&94.6&1.022\\
600&60&Complete pooling&0.13&0.132&0.132&94.9&1.033\\
\bottomrule
\end{tabular}
\end{table}

Table~\ref{tab:auxnoise} shows the effect of auxiliary-outcome precision using common random numbers across the three values of $\sigma_Y$. A more precise auxiliary outcome gives a larger efficiency gain. For complete pooling, relative efficiency decreases from 1.052 at $\sigma_Y=0.5$ to 1.019 at $\sigma_Y=2$. The same pattern is present for the fixed and adaptive penalties.

\begin{table}[ht]
\centering
\caption{Sensitivity to the auxiliary noise level under equal baseline covariate effects ($n=600$, 30\% censoring). Bias is multiplied by 100.}
\label{tab:auxnoise}
\small
\begin{tabular}{rlrrrrr}
\toprule
$\sigma_Y$ & Method & Bias$\times100$ & ESD & ASE & Cov. & Rel. eff.\\
\midrule
0.5&Separate&-0.57&0.099&0.101&95.2&1.000\\
0.5&Penalty $\lambda=1$&-0.21&0.097&0.099&95.1&1.040\\
0.5&Adaptive&-0.23&0.097&0.099&95.1&1.039\\
0.5&Complete pooling&-0.07&0.096&0.099&95.3&1.052\\
1&Separate&-0.57&0.099&0.101&95.2&1.000\\
1&Penalty $\lambda=1$&-0.25&0.097&0.100&95.2&1.032\\
1&Adaptive&-0.28&0.097&0.100&95.2&1.030\\
1&Complete pooling&-0.16&0.097&0.099&95.1&1.038\\
2&Separate&-0.57&0.099&0.101&95.2&1.000\\
2&Penalty $\lambda=1$&-0.35&0.098&0.100&95.3&1.018\\
2&Adaptive&-0.43&0.098&0.100&95.3&1.013\\
2&Complete pooling&-0.32&0.098&0.100&95.3&1.019\\
\bottomrule
\end{tabular}
\end{table}

\subsection{Fixed and local outcome differences}
We next fixed $n=600$, 30\% censoring, and $\sigma_Y=1$, and increased the magnitude $\delta$ in the direction with a strong first-order effect on the treatment coefficient. Table~\ref{tab:fixed} and Figures~\ref{fig:rmse}--\ref{fig:coverage} illustrate the effects of a fixed difference, consistent with the population-target change described in Proposition~\ref{prop:fixedpop}. When $\delta=0$, stronger penalization reduces variance. With a persistent difference, a fixed positive penalty introduces bias, and the bias increases with both $\delta$ and $\lambda$. At $\delta=0.30$, for example, $\lambda=1$ has bias $-0.027$ and 91.6\% coverage, whereas the adaptive estimator has bias $-0.010$ and an average penalty of 0.20. At $\delta=0.60$, the adaptive penalty is essentially zero and its performance is indistinguishable from separate estimation. Complete pooling gives the largest gain at $\delta=0$ but deteriorates most rapidly when the equality restriction is false.

\begin{table}[ht]
\centering
\caption{Fixed outcome differences in the direction that affects the treatment estimate, $f_{Y0}-f_{T0}=\delta g$ ($n=600$, 30\% censoring). Bias is multiplied by 100.}
\label{tab:fixed}
\small
\resizebox{\textwidth}{!}{%
\begin{tabular}{rlrrrrr}
\toprule
$\delta$ & Method & Bias$\times100$ & ESD & Cov. & RMSE & Mean $\lambda$\\
\midrule
0.00&Separate&-0.49&0.099&95.9&0.099&0.00\\
0.00&Penalty $\lambda=0.25$&-0.34&0.098&96.1&0.098&0.25\\
0.00&Penalty $\lambda=1$&-0.24&0.097&95.8&0.097&1.00\\
0.00&Penalty $\lambda=4$&-0.19&0.097&96.0&0.097&4.00\\
0.00&Adaptive&-0.26&0.097&95.8&0.097&0.90\\
0.00&Complete pooling&-0.17&0.097&95.9&0.097&--\\
0.15&Separate&-0.35&0.100&95.5&0.100&0.00\\
0.15&Penalty $\lambda=0.25$&-0.97&0.100&95.8&0.101&0.25\\
0.15&Penalty $\lambda=1$&-1.42&0.101&95.9&0.102&1.00\\
0.15&Penalty $\lambda=4$&-1.66&0.101&95.3&0.103&4.00\\
0.15&Adaptive&-1.22&0.101&95.9&0.102&0.73\\
0.15&Complete pooling&-1.76&0.102&95.3&0.103&--\\
0.30&Separate&-0.35&0.100&94.7&0.100&0.00\\
0.30&Penalty $\lambda=0.25$&-1.73&0.103&93.3&0.104&0.25\\
0.30&Penalty $\lambda=1$&-2.71&0.105&91.6&0.109&1.00\\
0.30&Penalty $\lambda=4$&-3.21&0.107&90.8&0.111&4.00\\
0.30&Adaptive&-1.02&0.102&93.9&0.102&0.20\\
0.30&Complete pooling&-3.44&0.107&90.7&0.113&--\\
0.60&Separate&-0.28&0.104&94.2&0.104&0.00\\
0.60&Penalty $\lambda=0.25$&-3.24&0.112&90.1&0.117&0.25\\
0.60&Penalty $\lambda=1$&-5.35&0.119&87.1&0.130&1.00\\
0.60&Penalty $\lambda=4$&-6.41&0.123&84.8&0.138&4.00\\
0.60&Adaptive&-0.28&0.104&94.2&0.104&0.00\\
0.60&Complete pooling&-6.86&0.124&84.1&0.142&--\\
\bottomrule
\end{tabular}}
\end{table}

\begin{figure}[ht]
\centering
\includegraphics[width=0.68\textwidth]{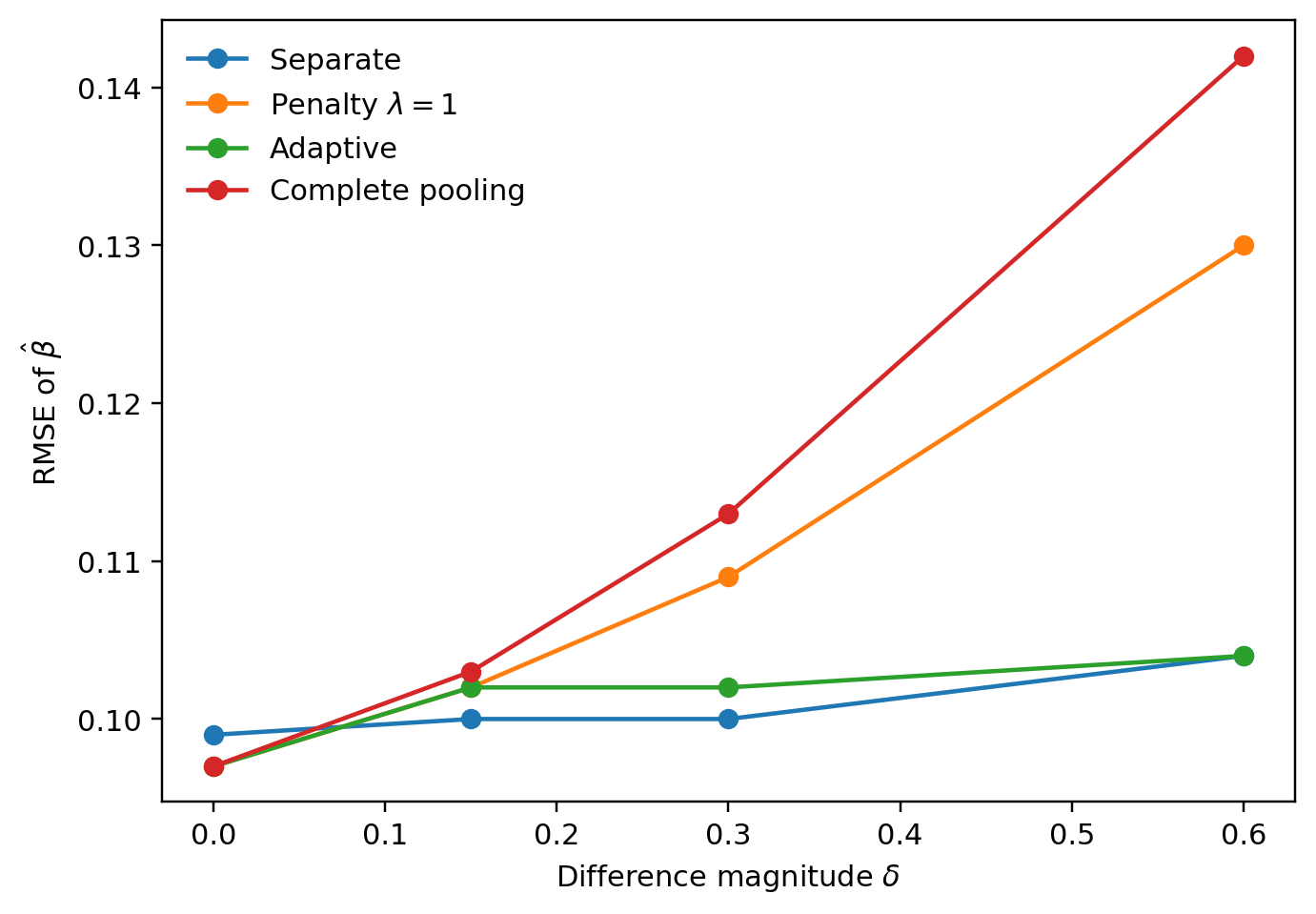}
\caption{RMSE of the treatment log-hazard-ratio estimator as the fixed outcome difference increases.}
\label{fig:rmse}
\end{figure}

\begin{figure}[ht]
\centering
\includegraphics[width=0.68\textwidth]{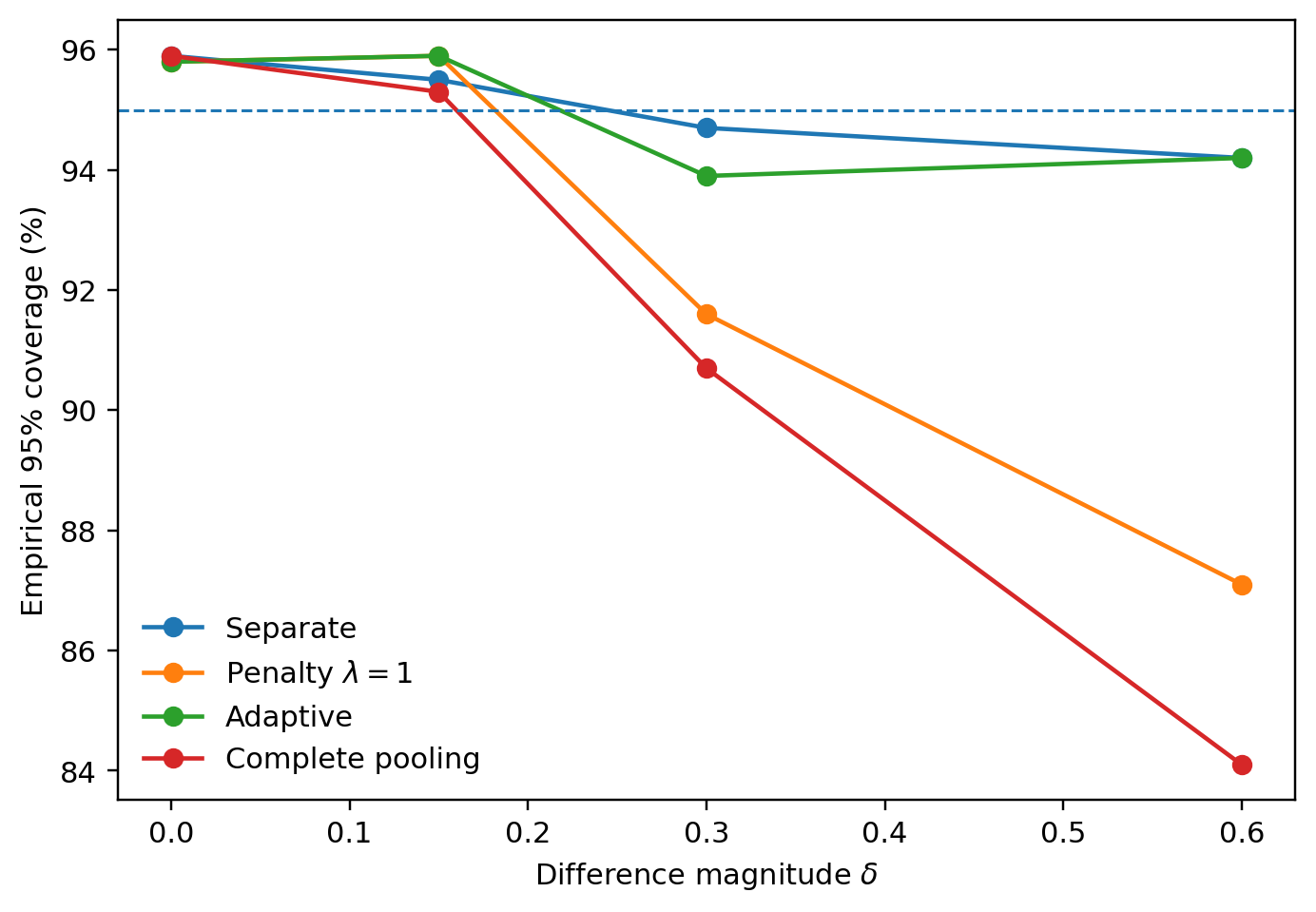}
\caption{Empirical coverage of nominal 95\% Wald intervals as the fixed outcome difference increases.}
\label{fig:coverage}
\end{figure}

To examine Theorem~\ref{thm:local} directly, we generated $f_{Y0,n}-f_{T0,n}=4n^{-1/2}g$ in the same direction. To compare with the separate estimator, Table~\ref{tab:local} and
Figure~\ref{fig:localbias} report the incremental scaled bias
\[
\sqrt n\{\operatorname{Bias}(\widehat\beta_\lambda)-\operatorname{Bias}(\widehat\beta_0)\},
\]
which measures the additional bias relative to separate estimation. The empirical values move toward the population mean shifts from Theorem~\ref{thm:local}. For $\lambda=1$, for example, the empirical scaled bias is $-0.252$, $-0.283$, and $-0.300$ for $n=300,600,1200$, compared with the theoretical value $-0.327$. The pattern is consistent with the root-$n$ local expansion in Theorem~\ref{thm:local}.

\begin{table}[ht]
\centering
\caption{Local difference $f_{Y0,n}-f_{T0,n}=4n^{-1/2}g$. Empirical shift is $\sqrt n\{\operatorname{Bias}(\widehat\beta_\lambda)-\operatorname{Bias}(\widehat\beta_0)\}$; Theory is $B_\lambda(4h_g)$, where $g(x)=b_K(x)^\top h_g$.}
\label{tab:local}
\small
\begin{tabular}{rrrrrr}
\toprule
$n$ & $\lambda$ & Empirical shift & Theory & ESD & Cov.\\
\midrule
300&0.25&-0.148&-0.190&0.143&94.6\\
300&1&-0.252&-0.327&0.145&94.3\\
300&4&-0.304&-0.399&0.146&93.6\\
600&0.25&-0.164&-0.190&0.102&95.1\\
600&1&-0.283&-0.327&0.103&94.2\\
600&4&-0.345&-0.399&0.103&93.8\\
1200&0.25&-0.174&-0.190&0.071&95.0\\
1200&1&-0.300&-0.327&0.071&94.3\\
1200&4&-0.367&-0.399&0.072&93.8\\
\bottomrule
\end{tabular}
\end{table}

\begin{figure}[ht]
\centering
\includegraphics[width=0.68\textwidth]{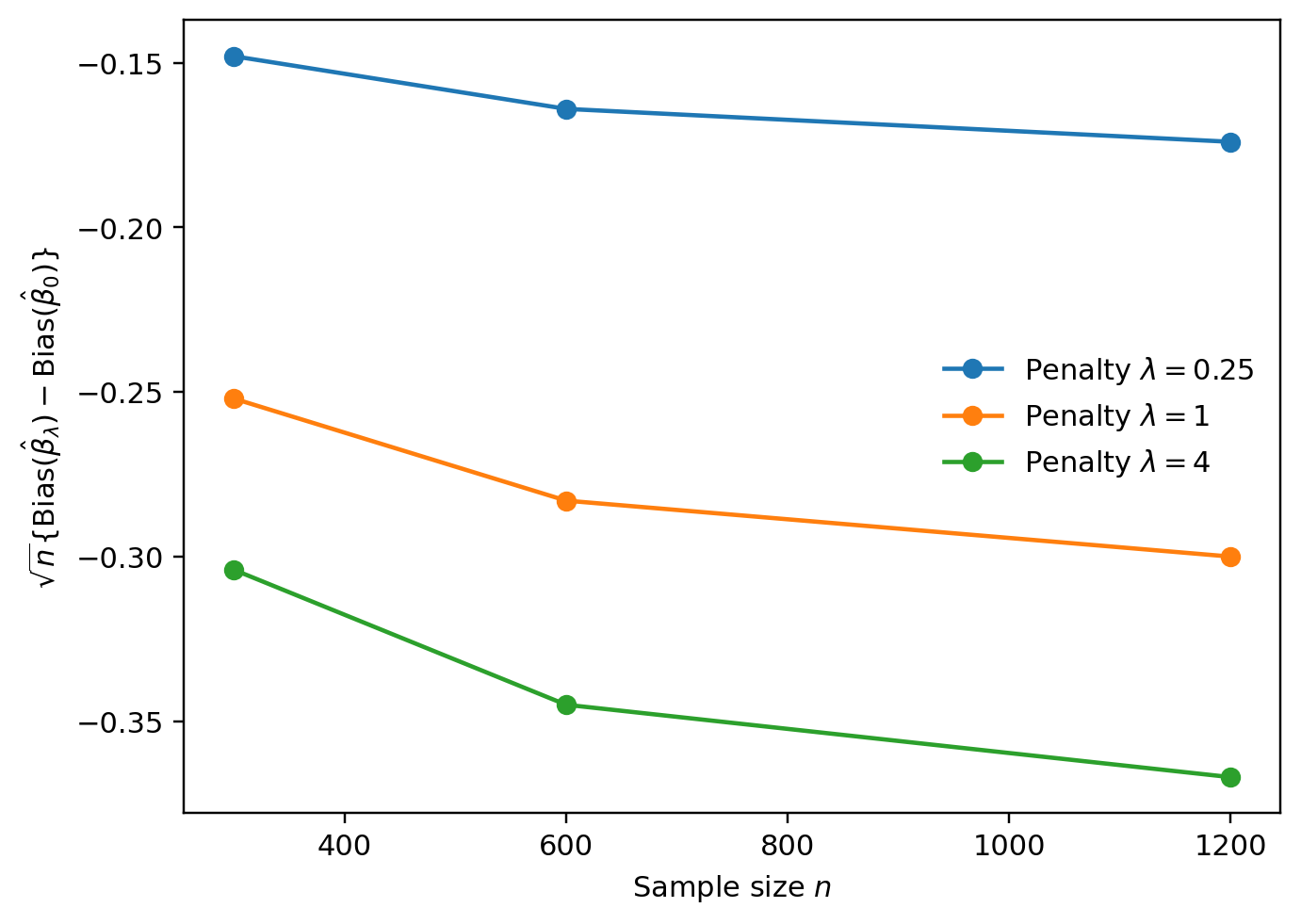}
\caption{Incremental scaled bias under the local difference. The limiting values from Theorem~\ref{thm:local} are $-0.190$, $-0.327$, and $-0.399$ for $\lambda=0.25,1,4$, respectively.}
\label{fig:localbias}
\end{figure}

\subsection{Direction of the outcome difference}
Finally, Table~\ref{tab:alignment} compares two outcome differences with the same $L_2(P_X)$ magnitude, $\delta=0.40$, but different first-order effects on the treatment parameter. The survival data are paired across the two designs; only the direction of $f_{Y0}-f_{T0}$ changes. With the treatment-relevant difference, $\lambda=1$ adds an incremental bias of about $-0.032$ and reduces coverage to 92.1\%. With the approximately orthogonal difference, the incremental bias is only 0.0029 and coverage is essentially unchanged. Thus the $L_2$ distance between the two functions does not by itself determine the effect of penalization on $\widehat\beta$.

The adaptive rule uses only $\|\widehat f_Y^{\mathrm{sep}}-\widehat f_T^{\mathrm{sep}}\|_n$ and is therefore conservative in this comparison: it largely turns off penalization for both directions even though the approximately orthogonal difference has little effect on the treatment estimate. This illustrates why the target-specific AMSE criterion in Section~\ref{sec:adaptive} is useful as a theoretical benchmark.

\begin{table}[ht]
\centering
\caption{Outcome differences with the same $L_2(P_X)$ magnitude ($\delta=0.40$) but different first-order effects on the treatment parameter. Incremental bias is relative to separate estimation and is multiplied by 100.}
\label{tab:alignment}
\small
\resizebox{\textwidth}{!}{%
\begin{tabular}{llrrrrrr}
\toprule
Direction & Method & Bias$\times100$ & Inc. bias$\times100$ & ESD & Cov. & RMSE & Mean $\lambda$\\
\midrule
Treatment-relevant&Separate&0.08&0.00&0.101&94.2&0.101&0.00\\
Treatment-relevant&Penalty $\lambda=1$&-3.08&-3.17&0.109&92.1&0.113&1.00\\
Treatment-relevant&Adaptive&-0.11&-0.19&0.102&94.2&0.102&0.04\\
Treatment-relevant&Complete pooling&-4.06&-4.14&0.112&91.4&0.119&--\\
Approximately orthogonal&Separate&0.08&0.00&0.101&94.2&0.101&0.00\\
Approximately orthogonal&Penalty $\lambda=1$&0.37&0.29&0.100&94.3&0.100&1.00\\
Approximately orthogonal&Adaptive&0.08&-0.00&0.101&94.1&0.101&0.02\\
Approximately orthogonal&Complete pooling&0.57&0.49&0.101&93.6&0.102&--\\
\bottomrule
\end{tabular}}
\end{table}

Overall, the simulations agree with the theoretical picture. Information sharing improves precision when the two baseline covariate functions agree, with larger gains when the auxiliary outcome is more informative or the survival outcome is more heavily censored. A persistent difference can convert the variance reduction into bias, while the adaptive procedure returns toward separate estimation. Under local differences, the mean shift occurs on the predicted root-$n$ scale and depends strongly on the direction of the difference rather than only its norm.

\FloatBarrier
\section{ACTG 175 data analysis}
\label{sec:application}

\subsection{Data and analysis}
We illustrate the procedure using AIDS Clinical Trials Group Study 175, a randomized trial comparing nucleoside antiretroviral regimens in HIV-infected adults with baseline CD4 counts between 200 and 500 cells/mm$^3$ \citep{hammer1996}. We use the public ACTG 175 data set distributed with the BART R package, which contains 2,139 participants. The survival outcome is the right-censored time to the first occurrence of a substantial CD4 decline, progression to AIDS, or death. In this data set, 521 participants experienced the composite event (24.4\%). We define $Z=1$ for assignment to any of the three active regimens other than zidovudine monotherapy and $Z=0$ for zidovudine monotherapy. There were 1,607 and 532 participants in these two groups, respectively.

The auxiliary outcome was the change in CD4 count from baseline to 20 weeks,
\[
Y_i=\operatorname{CD4}_{i,20}-\operatorname{CD4}_{i,0}.
\]
The mean change was $-17.1$ cells/mm$^3$ in the zidovudine-monotherapy group and 33.3 cells/mm$^3$ in the other-treatment group. Before fitting the joint model, the auxiliary outcome was centered and divided by its sample standard deviation, 122.3 cells/mm$^3$, to express the auxiliary covariate function on the sample-standardized outcome scale. A separate partially linear analysis estimated an adjusted treatment difference of 50.3 cells/mm$^3$ in 20-week CD4 change, with robust standard error 5.1 cells/mm$^3$.

The baseline covariates $X$ were age, body weight, Karnofsky score, duration of prior antiretroviral therapy, baseline CD4 count, baseline CD8 count, hemophilia, homosexual activity, history of intravenous drug use, prior non-zidovudine antiretroviral therapy, zidovudine use in the preceding 30 days, race, sex, antiretroviral-treatment history, and symptomatic status. The six continuous variables were represented by cubic B-splines with four degrees of freedom; binary variables entered linearly. After centering and removing linearly redundant columns, the sieve dimension was 31. The Cox partial likelihood used the Breslow approximation for tied event times. Standard errors in this section use the full empirical sandwich covariance. In particular, the meat matrix retains the empirical covariance between the auxiliary residual score and the survival martingale score, as well as the subject-level contribution of the empirical quadratic penalty. 

\subsection{Treatment-effect estimates over the penalty path}
We fitted the model over
\[
\lambda\in\{0,0.03,0.1,0.3,1,3,10,30\},
\]
and also fitted the complete-pooling limit. Table~\ref{tab:actg} gives representative values. With separate estimation, the estimated treatment log-hazard ratio was $-0.689$ (robust SE 0.096), corresponding to a hazard ratio of 0.502 with 95\% confidence interval (0.416, 0.606). Moderate penalization reduced the standard error slightly. At $\lambda=0.3$, for example, the standard error was 0.092, corresponding to a variance ratio of approximately 1.084 relative to separate estimation, while the point estimate moved from $-0.689$ to $-0.667$. Larger penalties produced little additional change in precision and moved the estimate further toward the complete-pooling value. Figures~\ref{fig:actgpath} and \ref{fig:actgse} display the treatment estimate and standard error over the full penalty path. Because the population target can depend on the penalty when the covariate functions differ, a smaller standard error alone does not establish improved inference for the original survival coefficient.

\begin{table}[ht]
\centering
\caption{ACTG 175 treatment-effect estimates over the penalty path. Standard errors use the full empirical sandwich covariance. $\widehat D_\lambda=\|\widehat f_{Y,\lambda}-\widehat f_{T,\lambda}\|_n$.}
\label{tab:actg}
\small
\resizebox{\textwidth}{!}{%
\begin{tabular}{lrrrrlr}
\toprule
Method & $\lambda$ & $\widehat\beta$ & Robust SE & Hazard ratio & 95\% CI & $\widehat D_\lambda$\\
\midrule
Separate&0&-0.689&0.096&0.502&(0.416, 0.606)&0.795\\
Penalized&0.10&-0.675&0.093&0.509&(0.424, 0.611)&0.484\\
Penalized&0.30&-0.667&0.092&0.513&(0.428, 0.616)&0.289\\
Penalized&1.00&-0.660&0.093&0.517&(0.431, 0.620)&0.120\\
Penalized&3.00&-0.658&0.093&0.518&(0.431, 0.622)&0.045\\
Complete pooling&$\infty$&-0.657&0.094&0.519&(0.432, 0.623)&0\\
Adaptive&0&-0.689&0.096&0.502&(0.416, 0.606)&0.795\\
\bottomrule
\end{tabular}}
\end{table}

\begin{figure}[ht]
\centering
\includegraphics[width=0.72\textwidth]{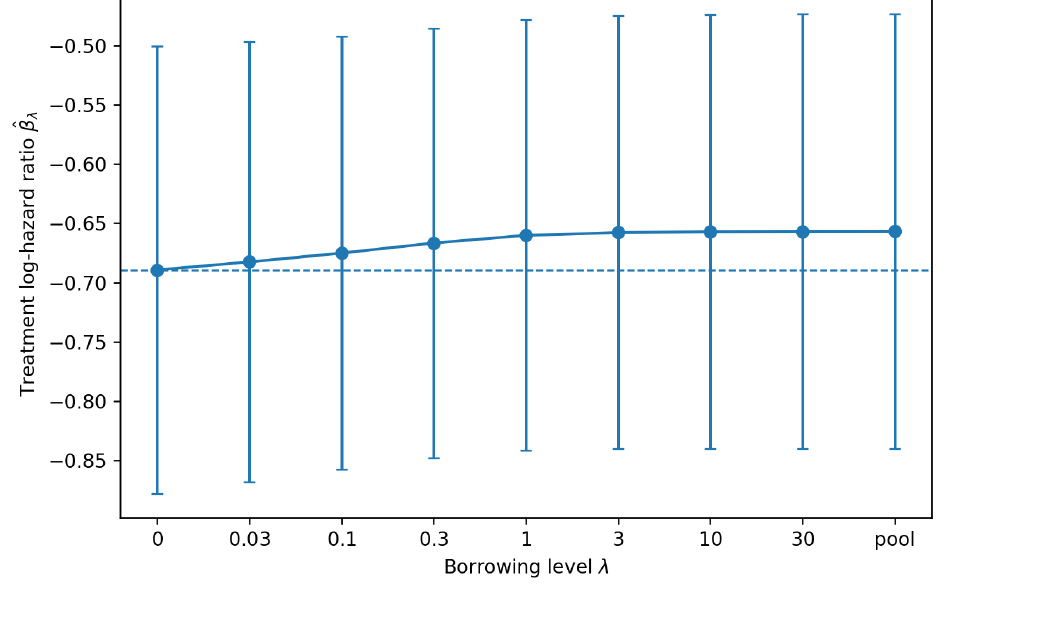}
\caption{Estimated treatment log-hazard ratio and robust 95\% confidence interval over the penalty path in ACTG 175. The dashed horizontal line is the separate-estimation value. The final point is the complete-pooling fit.}
\label{fig:actgpath}
\end{figure}

\begin{figure}[ht]
\centering
\includegraphics[width=0.68\textwidth]{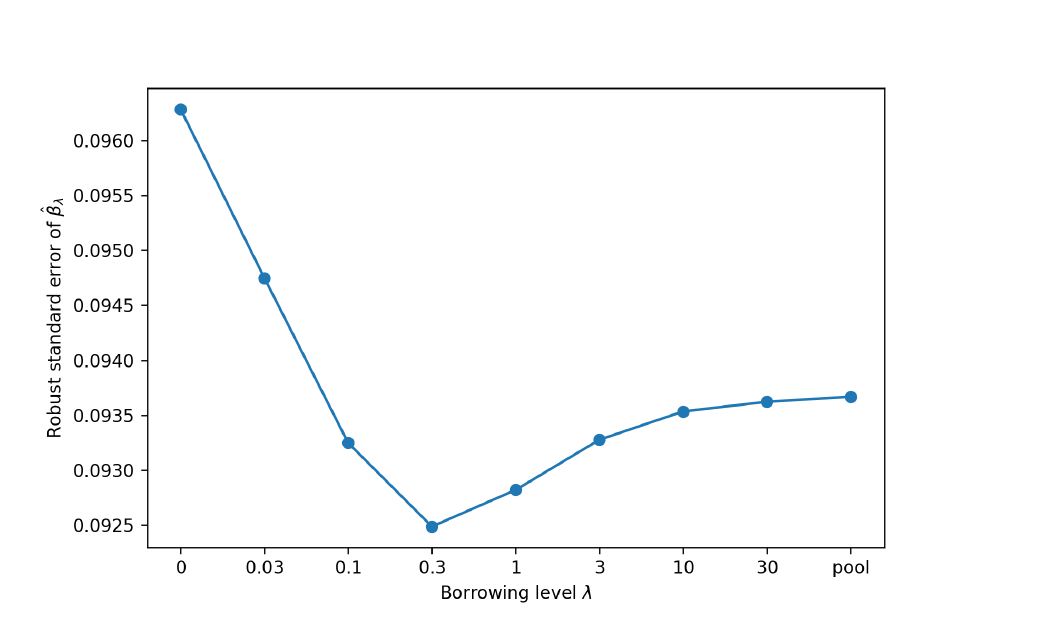}
\caption{Robust standard error of the treatment log-hazard-ratio estimator over the penalty path. The estimated standard error is modestly smaller for intermediate values of $\lambda$.}
\label{fig:actgse}
\end{figure}

The separate outcome fits produced
\[
\widehat D_n=\|\widehat f_Y^{\mathrm{sep}}-\widehat f_T^{\mathrm{sep}}\|_n=0.795.
\]
For comparability with the simulation study, we used the same prespecified adaptive rule, with $\lambda_+=1$ and $\tau_n=1.75n^{-1/4}$. Here $\tau_n=0.257$, so $\widehat D_n/\tau_n=3.09$ and the rule selects $\widehat\lambda=0$. The adaptive estimate is therefore the separate-estimation result. This is useful in the present data: the auxiliary outcome is strongly associated with randomized treatment, but its estimated baseline covariate effects are not sufficiently close to those of the survival outcome to support penalization under the prespecified rule.

\subsection{Sensitivity analyses}
We examined two sensitivity analyses. First, we repeated the analysis using three, four, or five spline degrees of freedom for each continuous baseline variable. In all three cases the adaptive rule selected $\widehat\lambda=0$. The separate-estimation hazard-ratio estimates were 0.509, 0.502, and 0.497, respectively, while the corresponding estimates at $\lambda=1$ were 0.518, 0.517, and 0.516. The qualitative conclusion was therefore insensitive to moderate changes in sieve complexity.

Second, because the penalty depends on the relative scale of the outcome-specific covariate functions, we repeated the analysis after multiplying the standardized CD4-change outcome by 0.5 and 2. Table~\ref{tab:scale} shows that fixed-penalty estimates are scale dependent, as expected. For example, the complete-pooling estimate varies from $-0.645$ to $-0.691$ across these normalizations. In contrast, the adaptive procedure selected $\widehat\lambda=0$ under all three normalizations because the pilot differences were 0.764, 0.795, and 0.987, all well above the threshold 0.257. Figure~\ref{fig:scale} shows the corresponding penalty paths.

\begin{table}[ht]
\centering
\caption{Sensitivity to the scale used for the auxiliary outcome. The scale multiplier is applied after centering and standardizing 20-week CD4 change.}
\label{tab:scale}
\small
\begin{tabular}{r l rrr}
\toprule
Scale multiplier & Method & $\widehat\beta$ & Robust SE & Hazard ratio\\
\midrule
0.5&$\lambda=1$&-0.652&0.092&0.521\\
0.5&Complete pooling&-0.645&0.092&0.525\\
0.5&Adaptive ($\widehat\lambda=0$)&-0.689&0.096&0.502\\
1.0&$\lambda=1$&-0.660&0.093&0.517\\
1.0&Complete pooling&-0.657&0.094&0.519\\
1.0&Adaptive ($\widehat\lambda=0$)&-0.689&0.096&0.502\\
2.0&$\lambda=1$&-0.680&0.096&0.507\\
2.0&Complete pooling&-0.691&0.100&0.501\\
2.0&Adaptive ($\widehat\lambda=0$)&-0.689&0.096&0.502\\
\bottomrule
\end{tabular}
\end{table}

\begin{figure}[ht]
\centering
\includegraphics[width=0.62\textwidth]{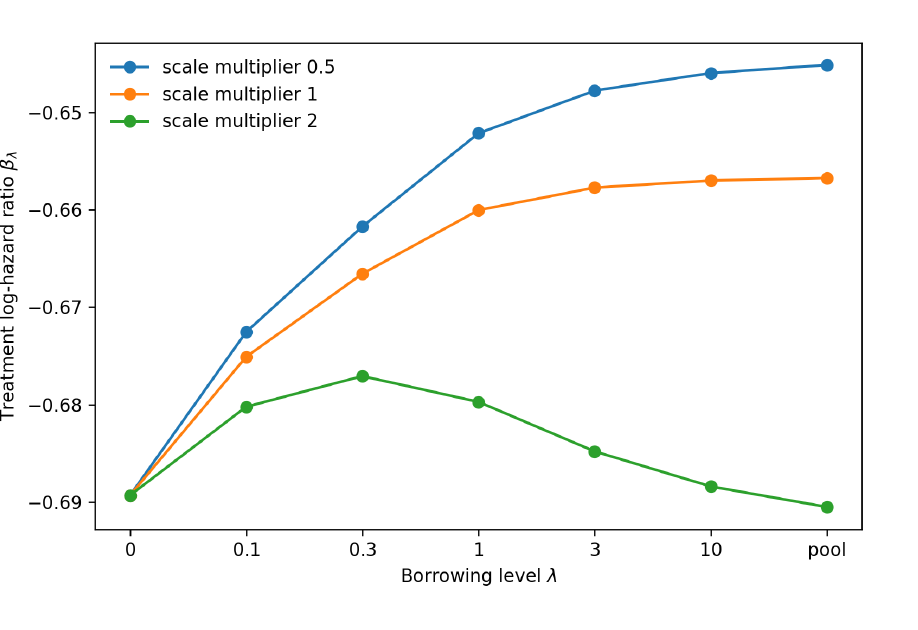}
\caption{Sensitivity of the fixed-penalty treatment estimate to normalization of the auxiliary outcome. The adaptive rule selects separate estimation under all three normalizations.}
\label{fig:scale}
\end{figure}
\FloatBarrier

\clearpage
\appendix
\section{Proofs of the main results}
\label{app:proofs}

For readability, the main text states the theoretical results and their interpretation. Detailed proofs are collected here in the order in which the results appear.

\subsection*{Proof of Proposition~\ref{prop:exacttarget}}
\begin{proof}
For every admissible $(\alpha,\beta,f_Y,f_T)$, separate optimality gives
\[
L_Y(\alpha,f_Y)\ge L_Y(\alpha_0,f_0),
\qquad
L_T(\beta,f_T)\ge L_T(\beta_0,f_0).
\]
The discrepancy term is nonnegative. When the functions coincide, it equals
\[
\frac\lambda2\norm{f_0-f_0}_{P_X}^2=0.
\]
Therefore
\begin{align*}
Q(\alpha,\beta,f_Y,f_T;\lambda)
&\ge L_Y(\alpha_0,f_0)+L_T(\beta_0,f_0)\\
&=Q(\alpha_0,\beta_0,f_0,f_0;\lambda),
\end{align*}
which proves global optimality. If the outcome-specific minimizers are
unique, equality requires both loss differences to be zero. Hence
$(\alpha,f_Y)=(\alpha_0,f_0)$ and
$(\beta,f_T)=(\beta_0,f_0)$, proving uniqueness.
\end{proof}

\subsection*{Proof of Proposition~\ref{prop:fixedpop}}
\begin{proof}
At the interior minimizers of the two differentiable outcome criteria,
their directional derivatives vanish along every admissible path. Thus,
for an admissible direction $(h_Y,h_T)$,
\begin{align*}
&\left.\frac{d}{dt}
Q(\alpha_0,\beta_0,f_{Y0}+t h_Y,f_{T0}+t h_T;\lambda)
\right|_{t=0}\\
&\qquad=\lambda\ip{d_0}{h_Y-h_T}_{P_X},
\end{align*}
by \eqref{eq:penalty-direction}. The assumed nonzero inner product gives
a nonzero derivative of the full criterion, so the unpenalized truth is
not a stationary point.
\end{proof}

\subsection*{Proof of Lemma~\ref{lem:orth}}
\begin{proof}
Write $q(Z,X)=\widetilde b_K(X,Z)/\sigma_Y$, so that $s_Y=\varepsilon q(Z,X)$. The Cox score can be written as
\[
s_S=\int_0^\tau a(t,Z,X)\,dM(t),
\qquad
a(t,Z,X)=W-\bar W(t).
\]
Because $q$ and $a$ are predictable functions of baseline variables and time,
\begin{align*}
\E(s_Ys_S^\top)
&=\E\left[
\varepsilon q(Z,X)
\left\{\int_0^\tau a(t,Z,X)\,dM(t)\right\}^\top
\right]\\
&=\E\left[
q(Z,X)
\int_0^\tau a(t,Z,X)^\top
\E\{\varepsilon\,dM(t)\mid Z,X\}
\right]\\
&=0.
\end{align*}
The first $p$ columns and last $K$ columns of this zero matrix are respectively $\Cov(s_Y,s_\beta)$ and $\Cov(s_Y,s_T)$.
\end{proof}

\subsection*{Proof of Theorem~\ref{thm:rootfixed}}
\begin{proof}
Let $\Psi_n(\vartheta)$ denote the gradient of the criterion after
profiling out $\alpha$. The estimator satisfies
$\Psi_n(\widehat\vartheta_\lambda)=o_p(n^{-1/2})$, allowing for a
negligible optimization error. The expansion in
Assumption~\ref{ass:quad} gives
\[
0=\Psi_n(\vartheta_0)
  +\{\mathcal A_\lambda+o_p(1)\}
    (\widehat\vartheta_\lambda-\vartheta_0)
  +o_p(n^{-1/2}).
\]
Under equality, the penalty gradient is zero at $\vartheta_0$.
The scores in \eqref{eq:coxscore} and \eqref{eq:auxprofilescore} are
positive likelihood and residual scores, whereas $\Psi_n$ is the
gradient of a loss. Consequently,
\[
\sqrt n\,\Psi_n(\vartheta_0)
=-\frac{1}{\sqrt n}\sum_{i=1}^n
\begin{pmatrix}
s_\beta(O_i)\\s_T(O_i)\\s_Y(O_i)
\end{pmatrix}
+o_p(1).
\]
Assumption~\ref{ass:pd} and the nonnegative penalty Hessian imply that
$\mathcal A_\lambda$ is positive definite. Solving the preceding
expansion therefore yields
\[
\sqrt n(\widehat\vartheta_\lambda-\vartheta_0)
=\mathcal A_\lambda^{-1}
\frac{1}{\sqrt n}\sum_{i=1}^n
\begin{pmatrix}
s_\beta(O_i)\\s_T(O_i)\\s_Y(O_i)
\end{pmatrix}
+o_p(1),
\]
which is \eqref{eq:rootfull}. The joint central limit theorem, the score
covariance in \eqref{eq:B}, and Slutsky's theorem give the stated limit.
\end{proof}

\subsection*{Proof of Lemma~\ref{lem:eliminate}}
\begin{proof}
For a generic score contribution, write
$\delta=\mathcal A_\lambda^{-1}(s_\beta^\top,s_T^\top,s_Y^\top)^\top$
and partition it as $(\delta_\beta^\top,\delta_T^\top,\delta_Y^\top)^\top$.
It satisfies
\begin{align}
A\delta_\beta+C\delta_T&=s_\beta,
\label{eq:lin1}\\
C^\top\delta_\beta+(D+\lambda G)\delta_T-\lambda G\delta_Y&=s_T,
\label{eq:lin2}\\
-\lambda G\delta_T+(E+\lambda G)\delta_Y&=s_Y.
\label{eq:lin3}
\end{align}
From \eqref{eq:lin3},
\[
\delta_Y=(E+\lambda G)^{-1}(\lambda G\delta_T+s_Y).
\]
Substituting this expression into \eqref{eq:lin2} yields
\begin{align*}
C^\top\delta_\beta
&+\left[D+\lambda G-\lambda^2G(E+\lambda G)^{-1}G\right]\delta_T\\
&=s_T+\lambda G(E+\lambda G)^{-1}s_Y.
\end{align*}
The matrix multiplying $\delta_T$ is exactly $H_\lambda$, and the right-hand random term is $s_T+W_\lambda s_Y$. Together with \eqref{eq:lin1}, this proves \eqref{eq:Atilde}--\eqref{eq:stilde}.

For the covariance, Assumption~\ref{ass:crossorth} gives $\Cov(s_T,s_Y)=\Cov(s_\beta,s_Y)=0$. Therefore
\begin{align*}
\Var(s_T+W_\lambda s_Y)
&=\Var(s_T)+W_\lambda\Var(s_Y)W_\lambda^\top\\
&=D+W_\lambda E W_\lambda^\top
=J_\lambda,
\end{align*}
and
\[
\Cov(s_\beta,s_T+W_\lambda s_Y)=C.
\]
This proves \eqref{eq:Btilde}.
\end{proof}

\subsection*{Proof of Lemma~\ref{lem:psd}}
\begin{proof}
The case $\lambda=0$ is immediate, so suppose $\lambda>0$. Define
\[
L=E^{-1/2}GE^{-1/2}.
\]
Because $E$ and $G$ are positive definite, $L$ is positive definite. Also,
\[
E+\lambda G=E^{1/2}(I+\lambda L)E^{1/2},
\]
so
\[
(E+\lambda G)^{-1}
=E^{-1/2}(I+\lambda L)^{-1}E^{-1/2}.
\]
Define
\[
M_\lambda=\lambda L(I+\lambda L)^{-1}.
\]
Since $M_\lambda$ is a matrix function of the positive-definite symmetric matrix $L$, it is symmetric and has eigenvalues
\[
m_j=\frac{\lambda\ell_j}{1+\lambda\ell_j},
\]
where $\ell_j>0$ are the eigenvalues of $L$. Hence $0<m_j<1$.

We next rewrite $K_\lambda$. Since $G=E^{1/2}LE^{1/2}$,
\begin{align*}
K_\lambda
&=\lambda E^{1/2}LE^{1/2}
-\lambda^2E^{1/2}L(I+\lambda L)^{-1}LE^{1/2}\\
&=E^{1/2}
\left[\lambda L-\lambda^2L(I+\lambda L)^{-1}L\right]
E^{1/2}.
\end{align*}
Because $L$ commutes with $(I+\lambda L)^{-1}$,
\begin{align*}
\lambda L-\lambda^2L(I+\lambda L)^{-1}L
&=\lambda L\left[I-\lambda(I+\lambda L)^{-1}L\right]\\
&=\lambda L(I+\lambda L)^{-1}=M_\lambda.
\end{align*}
Thus
\begin{equation}
K_\lambda=E^{1/2}M_\lambda E^{1/2}\succeq0.
\label{eq:Kfactor}
\end{equation}

We now treat $W_\lambda E W_\lambda^\top$. Using the same factorization,
\begin{align*}
W_\lambda
&=\lambda E^{1/2}LE^{1/2}
E^{-1/2}(I+\lambda L)^{-1}E^{-1/2}\\
&=E^{1/2}M_\lambda E^{-1/2}.
\end{align*}
Therefore
\begin{align*}
W_\lambda E W_\lambda^\top
&=E^{1/2}M_\lambda E^{-1/2}E
E^{-1/2}M_\lambda E^{1/2}\\
&=E^{1/2}M_\lambda^2E^{1/2}.
\end{align*}
Combining this identity with \eqref{eq:Kfactor},
\[
\Delta_\lambda
=K_\lambda-W_\lambda EW_\lambda^\top
=E^{1/2}(M_\lambda-M_\lambda^2)E^{1/2}.
\]
The eigenvalues of $M_\lambda-M_\lambda^2$ are $m_j(1-m_j)\ge0$. Hence $\Delta_\lambda\succeq0$.
\end{proof}

\subsection*{Proof of Theorem~\ref{thm:efficiency}}
\begin{proof}
By Lemma~\ref{lem:eliminate}, the asymptotic covariance for $(\widehat\beta_\lambda,\widehat\theta_{T,\lambda})$ can be computed from
\[
\widetilde{\mathcal A}_\lambda^{-1}
\widetilde{\mathcal B}_\lambda
\widetilde{\mathcal A}_\lambda^{-1}.
\]
Since
\[
\widetilde{\mathcal B}_\lambda
=
\begin{pmatrix}A&C\\C^\top&H_\lambda-\Delta_\lambda\end{pmatrix}
=
\widetilde{\mathcal A}_\lambda
-
\begin{pmatrix}0&0\\0&\Delta_\lambda\end{pmatrix},
\]
we obtain the exact identity
\begin{align}
\widetilde{\mathcal A}_\lambda^{-1}
\widetilde{\mathcal B}_\lambda
\widetilde{\mathcal A}_\lambda^{-1}
&=\widetilde{\mathcal A}_\lambda^{-1}
-\widetilde{\mathcal A}_\lambda^{-1}
\begin{pmatrix}0&0\\0&\Delta_\lambda\end{pmatrix}
\widetilde{\mathcal A}_\lambda^{-1}.
\label{eq:sanddecomp}
\end{align}

We next compute the relevant blocks of $\widetilde{\mathcal A}_\lambda^{-1}$. The Schur complement of $H_\lambda$ in $\widetilde{\mathcal A}_\lambda$ is
\[
P_\lambda=A-CH_\lambda^{-1}C^\top.
\]
Because $\widetilde{\mathcal A}_\lambda$ is positive definite, $P_\lambda$ is positive definite. The block inverse formula gives
\begin{align}
(\widetilde{\mathcal A}_\lambda^{-1})_{\beta\beta}
&=P_\lambda^{-1},
\label{eq:blockbb}\\
(\widetilde{\mathcal A}_\lambda^{-1})_{\beta T}
&=-P_\lambda^{-1}CH_\lambda^{-1}.
\label{eq:blockbt}
\end{align}
Taking the $(\beta,\beta)$ block of \eqref{eq:sanddecomp} and using \eqref{eq:blockbb}--\eqref{eq:blockbt} yields \eqref{eq:Vclosed}.

By Lemma~\ref{lem:psd}, $\Delta_\lambda\succeq0$. Hence
\[
P_\lambda^{-1}CH_\lambda^{-1}\Delta_\lambda
H_\lambda^{-1}C^\top P_\lambda^{-1}\succeq0,
\]
so \eqref{eq:Vclosed} implies
\begin{equation}
V_\lambda\preceq P_\lambda^{-1}.
\label{eq:firstbound}
\end{equation}

Again by Lemma~\ref{lem:psd}, $K_\lambda\succeq0$, and therefore
\[
H_\lambda=D+K_\lambda\succeq D.
\]
For positive-definite matrices, inversion reverses the Loewner order; hence
\[
H_\lambda^{-1}\preceq D^{-1}.
\]
Premultiplying by $C$ and postmultiplying by $C^\top$ preserves positive-semidefinite order, giving
\[
CH_\lambda^{-1}C^\top\preceq CD^{-1}C^\top.
\]
Subtracting these matrices from $A$ reverses the order and gives
\[
P_\lambda=A-CH_\lambda^{-1}C^\top
\succeq
A-CD^{-1}C^\top=P_0.
\]
Both matrices are positive definite, so inversion again reverses order:
\begin{equation}
P_\lambda^{-1}\preceq P_0^{-1}=V_0.
\label{eq:secondbound}
\end{equation}
Combining \eqref{eq:firstbound} and \eqref{eq:secondbound}, and noting
that $V_\lambda$ is a covariance matrix and hence positive semidefinite,
proves \eqref{eq:Vorder}.
\end{proof}

\subsection*{Proof of Corollary~\ref{cor:strict}}
\begin{proof}
From \eqref{eq:Vclosed},
\begin{align*}
a^\top(P_\lambda^{-1}-V_\lambda)a
&=a^\top P_\lambda^{-1}CH_\lambda^{-1}
\Delta_\lambda H_\lambda^{-1}C^\top P_\lambda^{-1}a\\
&=\norm{\Delta_\lambda^{1/2}H_\lambda^{-1}C^\top P_\lambda^{-1}a}^2.
\end{align*}
Thus condition \eqref{eq:strict2} makes the first inequality $V_\lambda\preceq P_\lambda^{-1}$ strict in direction $a$. Condition \eqref{eq:strict1} makes the second inequality $P_\lambda^{-1}\preceq V_0$ strict in direction $a$. Either condition therefore yields $a^\top V_\lambda a<a^\top V_0a$.

If $C=0$, then $P_\lambda=A=P_0$ and the correction term in \eqref{eq:Vclosed} is zero, so $V_\lambda=A^{-1}=V_0$.
\end{proof}

\subsection*{Proof of Theorem~\ref{thm:local}}
\begin{proof}
Write
$\vartheta_{0,n}=(\beta_0^\top,\theta_{T0,n}^\top,\theta_{Y0,n}^\top)^\top$
for the moving unpenalized truth. At this parameter, each limiting
outcome-specific score has mean zero. The only nonzero deterministic first-order term arises from the discrepancy penalty. Since
\[
\theta_{T0,n}-\theta_{Y0,n}=-n^{-1/2}h,
\]
the population penalty gradient with respect to $(\beta,\theta_T,\theta_Y)$ is
\[
\begin{pmatrix}
0\\
\lambda G(\theta_{T0,n}-\theta_{Y0,n})\\
\lambda G(\theta_{Y0,n}-\theta_{T0,n})
\end{pmatrix}
=
\frac1{\sqrt n}
\begin{pmatrix}
0\\-\lambda Gh\\\lambda Gh
\end{pmatrix}.
\]
The empirical penalty uses $G_n$ in place of $G$. Since $G_n\to_p G$,
this replacement changes the root-$n$ penalty term by $o_p(1)$.
Let $s_{\beta,n}(O_i)$, $s_{T,n}(O_i)$, and $s_{Y,n}(O_i)$ denote the
score contributions at the moving truth. The root-$n$ estimating equation
is therefore
\[
0=
-\frac1{\sqrt n}\sum_{i=1}^n
\begin{pmatrix}s_{\beta,n}(O_i)\\s_{T,n}(O_i)\\s_{Y,n}(O_i)\end{pmatrix}
+
\begin{pmatrix}0\\-\lambda Gh\\\lambda Gh\end{pmatrix}
+
\mathcal A_\lambda
\sqrt n(\widehat\vartheta_\lambda-\vartheta_{0,n})
+o_p(1).
\]
The assumed common-limit convergence of the derivative and score
covariance matrices, together with the uniform local expansion, gives
the limiting covariance $\mathcal B$. The additional first-order term
is the displayed deterministic penalty gradient.

Let $b=(b_\beta^\top,b_T^\top,b_Y^\top)^\top$ denote the deterministic mean shift of $\sqrt n(\widehat\vartheta_\lambda-\vartheta_{0,n})$. Taking expectations of the limiting linear system gives
\begin{equation}
\mathcal A_\lambda b
=-
\begin{pmatrix}0\\-\lambda Gh\\\lambda Gh\end{pmatrix}
=
\begin{pmatrix}0\\\lambda Gh\\-\lambda Gh\end{pmatrix}.
\label{eq:meansystem}
\end{equation}
Writing the three block equations explicitly,
\begin{align}
Ab_\beta+Cb_T&=0,
\label{eq:b1}\\
C^\top b_\beta+(D+\lambda G)b_T-\lambda Gb_Y&=\lambda Gh,
\label{eq:b2}\\
-\lambda Gb_T+(E+\lambda G)b_Y&=-\lambda Gh.
\label{eq:b3}
\end{align}
From \eqref{eq:b3},
\[
b_Y=(E+\lambda G)^{-1}(\lambda Gb_T-\lambda Gh).
\]
Substituting this into \eqref{eq:b2} gives
\begin{align*}
C^\top b_\beta
&+\{D+\lambda G-\lambda^2G(E+\lambda G)^{-1}G\}b_T\\
&=\lambda Gh-\lambda^2G(E+\lambda G)^{-1}Gh.
\end{align*}
The left-hand nuisance matrix is $H_\lambda$. The right-hand side is
\[
\{\lambda G-\lambda^2G(E+\lambda G)^{-1}G\}h
=K_\lambda h.
\]
Hence
\begin{equation}
C^\top b_\beta+H_\lambda b_T=K_\lambda h.
\label{eq:bprofile}
\end{equation}
Solving \eqref{eq:bprofile} for $b_T$ gives
\[
b_T=H_\lambda^{-1}(K_\lambda h-C^\top b_\beta).
\]
Substitute this into \eqref{eq:b1}:
\begin{align*}
0
&=Ab_\beta
+CH_\lambda^{-1}(K_\lambda h-C^\top b_\beta)\\
&=\{A-CH_\lambda^{-1}C^\top\}b_\beta
+CH_\lambda^{-1}K_\lambda h\\
&=P_\lambda b_\beta+CH_\lambda^{-1}K_\lambda h.
\end{align*}
Therefore
\[
b_\beta=-P_\lambda^{-1}CH_\lambda^{-1}K_\lambda h,
\]
which is \eqref{eq:Blambda}. The centered random part is the same linear transformation of the limiting Gaussian score as under $f_{Y0}=f_{T0}$, so its covariance is $V_\lambda$. Finally, $K_0=0$, implying $B_0(h)=0$.
\end{proof}

\subsection*{Proof of Proposition~\ref{prop:vanish}}
\begin{proof}
Let $\Psi_0(\vartheta)$ denote the gradient of the unpenalized population criterion and let $p(\vartheta)$ denote the gradient of the unit-weight discrepancy penalty. The penalized population first-order condition is
\begin{equation}
\Psi_0(\vartheta_\lambda)+\lambda p(\vartheta_\lambda)=0.
\label{eq:poptargeteq}
\end{equation}
At the unpenalized truth, $\Psi_0(\vartheta_0)=0$. Let $H_0=\dot\Psi_0(\vartheta_0)$, which is nonsingular by assumption. A Taylor expansion gives
\[
\Psi_0(\vartheta_\lambda)
=H_0(\vartheta_\lambda-\vartheta_0)
+o(\norm{\vartheta_\lambda-\vartheta_0}).
\]
Continuity of the penalty gradient gives
\[
p(\vartheta_\lambda)=p(\vartheta_0)+o(1).
\]
Substitution into \eqref{eq:poptargeteq} yields
\[
H_0(\vartheta_\lambda-\vartheta_0)
=-\lambda p(\vartheta_0)
+o(\norm{\vartheta_\lambda-\vartheta_0})+o(\lambda).
\]
Because $H_0^{-1}$ is bounded, the implicit-function theorem, or the displayed expansion itself after absorbing the smaller-order term, gives
\[
\vartheta_\lambda-\vartheta_0
=-\lambda H_0^{-1}p(\vartheta_0)+o(\lambda),
\]
proving \eqref{eq:poptargetrate}. If $\lambda_n\to0$ and the estimator
is consistent for $\vartheta_{\lambda_n}$, the triangle inequality gives
consistency for $\vartheta_0$. For the first-order distribution, when
$\sqrt n\lambda_n\to0$,
\[
\sqrt n(\vartheta_{\lambda_n}-\vartheta_0)
=O(\sqrt n\lambda_n)=o(1).
\]
Hence, if the stochastic expansion around $\vartheta_{\lambda_n}$ is uniform for $\lambda_n\to0$, centering at $\vartheta_0$ instead changes the root-$n$ statistic by $o_p(1)$.
\end{proof}

\subsection*{Proof of Theorem~\ref{thm:adaptive}}
\begin{proof}
Under equality, \eqref{eq:Drate} and \eqref{eq:taurate} give
\[
\widehat D_n/\tau_n=O_p(r_n/\tau_n)=o_p(1).
\]
Because $\omega$ is identically one on a neighborhood of zero,
$\Pp(\widehat\lambda_n=\lambda_+)\to1$, proving part (i).

Under a fixed difference, the reverse triangle inequality gives
\[
\left|\widehat D_n-\norm{f_{Y0}-f_{T0}}_n\right|
\le
\norm{\widehat f_Y^{\mathrm{sep}}-f_{Y0}}_n
+
\norm{\widehat f_T^{\mathrm{sep}}-f_{T0}}_n
=o_p(1).
\]
Square-integrability of the fixed difference and the law of large
numbers give $\norm{f_{Y0}-f_{T0}}_n\to_p D_0>0$. Hence
$\widehat D_n/\tau_n\to_p\infty$. Since $\omega(u)=0$ for $u\ge1$,
$\Pp(\widehat\lambda_n=0)\to1$, proving part (ii).

These probability-one-in-the-limit selection statements also imply
part (iii). Use the same measurable rule to choose a minimizer at each
penalty level. On the event $\widehat\lambda_n=\lambda_+$ the adaptive
and fixed-penalty estimators are identical, and on the event
$\widehat\lambda_n=0$ the adaptive and separate estimators are identical.
The probability of the complementary event tends to zero in the
respective cases, which gives both stated $o_p(1)$ differences. In fact,
the flat regions of $\omega$ make the additional equicontinuity condition
in part (iii) unnecessary for these two pointwise conclusions.
\end{proof}

\subsection*{Proof of Theorem~\ref{thm:growing}}
\begin{proof}
Under Assumptions~\ref{ass:nuisancerate}--\ref{ass:uniformblocks}, the profiled target estimator has the same root-$n$ linear representation as the finite-sieve block system up to $o_p(1)$. Part (i) therefore follows from the limiting central limit theorem and Slutsky's theorem.

For part (ii), let $a\in\R^p$. The finite-sieve ordering gives
\[
a^\top V_{\lambda,n}a\le a^\top V_{0,n}a
\]
for every $n$. Taking limits yields
\[
a^\top V_\lambda a\le a^\top V_0a.
\]
Because this holds for every $a$, $V_\lambda\preceq V_0$.

For part (iii), the local deterministic penalty term enters the sieve
estimating equations as in Theorem~\ref{thm:local}. Its target component
is $B_{\lambda,n}(h_n)$, whose convergence to $B_\lambda$ is assumed in
the theorem. The representation error and profile remainder are
negligible by the stated assumptions. Combining the mean-shift limit
with the centered limit from part (i) gives $N(B_\lambda,V_\lambda)$.
\end{proof}

\section{Matrix details for the variance comparison}
\label{app:algebra}

\subsection{Positive definiteness}

\begin{lemma}
Under Assumption~\ref{ass:pd}, $\mathcal A_\lambda$ in \eqref{eq:Alambda} and $\widetilde{\mathcal A}_\lambda$ in \eqref{eq:Atilde} are positive definite for every $\lambda\ge0$.
\end{lemma}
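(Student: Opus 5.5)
The plan is to write $\mathcal A_\lambda$ as a positive definite matrix plus a positive semidefinite one, and then get $\widetilde{\mathcal A}_\lambda$ as a Schur complement.

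\textbf{Step 1: decompose $\mathcal A_\lambda$.} Write
\[
\mathcal A_\lambda=\mathcal A_0+\lambda\,\Pi,
\qquad
\mathcal A_0=\begin{pmatrix}A&C&0\\C^\top&D&0\\0&0&E\end{pmatrix},
\qquad
\Pi=\begin{pmatrix}0&0&0\\0&G&-G\\0&-G&G\end{pmatrix}.
\]
By Assumption~\ref{ass:pd}, $\mathcal A_0$ is block diagonal with blocks $\mathcal I_T$ and $E$. Both blocks are positive definite, so $\mathcal A_0\succ0$.

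\textbf{Step 2: show the penalty part is positive semidefinite.} For any vector $v=(v_\beta^\top,v_T^\top,v_Y^\top)^\top$, a direct expansion gives
\[
v^\top\Pi v=(v_T-v_Y)^\top G(v_T-v_Y)\ge0,
\]
since $G$ is positive definite. This is just the Hessian of the quadratic penalty \eqref{eq:penaltytheta}. Hence for $\lambda\ge0$ and $v\neq0$,
\[
v^\top\mathcal A_\lambda v\ge v^\top\mathcal A_0 v>0,
\]
so $\mathcal A_\lambda$ is positive definite. It is also symmetric by inspection.

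\textbf{Step 3: transfer to $\widetilde{\mathcal A}_\lambda$.} The derivation in the proof of Lemma~\ref{lem:eliminate} shows that $\widetilde{\mathcal A}_\lambda$ is the Schur complement of the $(\theta_Y,\theta_Y)$ block $E+\lambda G$ in $\mathcal A_\lambda$:
\[
\widetilde{\mathcal A}_\lambda
=\begin{pmatrix}A&C\\C^\top&D+\lambda G\end{pmatrix}
-\begin{pmatrix}0\\-\lambda G\end{pmatrix}(E+\lambda G)^{-1}\begin{pmatrix}0&-\lambda G\end{pmatrix}.
\]
The only nonzero correction is in the lower-right block and equals $\lambda^2G(E+\lambda G)^{-1}G$. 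This reproduces $H_\lambda=D+K_\lambda$. The block $E+\lambda G$ is positive definite, so the inverse exists. The Schur complement of a positive definite block in a positive definite symmetric matrix is positive definite, because its inverse is the corresponding principal block of $\mathcal A_\lambda^{-1}$.

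\textbf{Alternative route for Step 3.} One can instead argue directly. By Lemma~\ref{lem:psd}, $K_\lambda\succeq0$, so
\[
\widetilde{\mathcal A}_\lambda=\mathcal I_T+\operatorname{diag}(0,K_\lambda)\succeq\mathcal I_T\succ0.
\]

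\textbf{Main obstacle.} There is no real obstacle. The one point that needs care is confirming that the Schur complement computed in Step 3 matches $H_\lambda$ exactly, including the sign convention for the off-diagonal penalty blocks.
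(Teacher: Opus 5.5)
Your proof is correct and follows the paper's argument. The paper likewise writes the quadratic form of $\mathcal A_\lambda$ as the $\mathcal I_T$ part plus $w^\top Ew$ plus $\lambda(v-w)^\top G(v-w)$, and then gets $\widetilde{\mathcal A}_\lambda\succ0$ as the Schur complement of the $\theta_Y$ block; your check that this Schur complement matches \eqref{eq:Atilde}, and your alternative route through $K_\lambda\succeq0$ from Lemma~\ref{lem:psd}, are valid additions that the paper leaves implicit.
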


\begin{proof}
For any $(u^\top,v^\top,w^\top)^\top$,
\begin{align*}
\begin{pmatrix}u\\v\\w\end{pmatrix}^\top
\mathcal A_\lambda
\begin{pmatrix}u\\v\\w\end{pmatrix}
&=
\begin{pmatrix}u\\v\end{pmatrix}^\top
\mathcal I_T
\begin{pmatrix}u\\v\end{pmatrix}
+w^\top Ew
+\lambda(v-w)^\top G(v-w).
\end{align*}
The first two terms are strictly positive unless $(u,v)=(0,0)$ and $w=0$, respectively, and the last term is nonnegative. Hence the sum is positive for every nonzero vector. Therefore $\mathcal A_\lambda\succ0$. A Schur complement of a positive-definite block matrix is positive definite, so eliminating the $\theta_Y$ block implies $\widetilde{\mathcal A}_\lambda\succ0$.
\end{proof}

\subsection{Large-penalty limit}

\begin{proposition}
Assume $E$ and $G$ are positive definite. Then
\[
K_\lambda\longrightarrow E
\qquad\text{as }\lambda\to\infty.
\]
Consequently,
\[
H_\lambda\to D+E.
\]
\end{proposition}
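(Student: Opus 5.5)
The plan is to reuse the factorization from the proof of Lemma~\ref{lem:psd}. With $L=E^{-1/2}GE^{-1/2}$ and $M_\lambda=\lambda L(I+\lambda L)^{-1}$, that proof showed
\[
K_\lambda=E^{1/2}M_\lambda E^{1/2}.
\]
This reduces the limit to one statement about the symmetric matrix function $M_\lambda$: it suffices to show $M_\lambda\to I$ as $\lambda\to\infty$. Congruence by the fixed matrix $E^{1/2}$ is continuous, so this gives $K_\lambda\to E^{1/2}IE^{1/2}=E$.

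To show $M_\lambda\to I$, diagonalize $L=U\,\mathrm{diag}(\ell_1,\dots,\ell_K)U^\top$ with $U$ orthogonal. Positive definiteness of $E$ and $G$ gives $\ell_j>0$ for every $j$. Then
\[
I-M_\lambda=U\,\mathrm{diag}\{(1+\lambda\ell_j)^{-1}\}U^\top .
\]
Its operator norm is $(1+\lambda\ell_{\min})^{-1}$, which tends to $0$. A quantitative form follows directly:
\[
\norm{K_\lambda-E}\le\norm{E}\,(1+\lambda\ell_{\min})^{-1}=O(\lambda^{-1}).
\]
Here $\ell_{\min}$ is the smallest eigenvalue of $L$, and it is at least $\lambda_{\min}(G)/\lambda_{\max}(E)$.

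As a cross-check not needed for the proof, the identity $K_\lambda=\lambda G-\lambda^2G(E+\lambda G)^{-1}G$ can be rewritten by the Woodbury-type identity as $K_\lambda=(E^{-1}+\lambda^{-1}G^{-1})^{-1}$. This tends to $(E^{-1})^{-1}=E$ by continuity of inversion at the positive definite matrix $E^{-1}$.

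The consequence follows immediately from $H_\lambda=D+K_\lambda$ in \eqref{eq:Hlambda}, because $D$ does not depend on $\lambda$.

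The only point needing care is that $E$ and $G$ need not commute. The symmetric reparametrization through $L$ handles this, so that $M_\lambda$ is a genuine function of a single symmetric matrix. No real obstacle is expected beyond confirming strict positivity of the eigenvalues $\ell_j$, which ensures uniform convergence in all directions.
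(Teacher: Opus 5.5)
Your proposal is correct and follows the paper's proof: both use the factorization $K_\lambda=E^{1/2}M_\lambda E^{1/2}$ from the proof of Lemma~\ref{lem:psd}, show $M_\lambda\to I$ because each eigenvalue $\lambda\ell_j/(1+\lambda\ell_j)\to1$, and then read off $H_\lambda\to D+E$ from $H_\lambda=D+K_\lambda$. Your $O(\lambda^{-1})$ operator-norm bound and the parallel-sum identity $K_\lambda=(E^{-1}+\lambda^{-1}G^{-1})^{-1}$ are both valid additions that the paper does not state.
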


\begin{proof}
From \eqref{eq:Kfactor},
\[
K_\lambda=E^{1/2}M_\lambda E^{1/2},
\qquad
M_\lambda=\lambda L(I+\lambda L)^{-1}.
\]
For each eigenvalue $\ell_j>0$ of $L$,
\[
\frac{\lambda\ell_j}{1+\lambda\ell_j}\to1.
\]
Hence $M_\lambda\to I$, so $K_\lambda\to E$. The second conclusion follows from $H_\lambda=D+K_\lambda$.
\end{proof}

Under strong penalization, the effective nuisance curvature approaches
$D+E$, as in a fit that imposes equality of the two nuisance coefficients.
Under the information identities and score orthogonality of
Theorem~\ref{thm:efficiency}, $W_\lambda\to I$ and
$\Delta_\lambda\to0$, so
\[
V_\lambda\longrightarrow
\{A-C(D+E)^{-1}C^\top\}^{-1}.
\]
Thus the correction term in \eqref{eq:Vclosed} vanishes in this limit
under those assumptions. A general pooled fit without the information
identities or score orthogonality still requires its full sandwich
covariance.

\section{Details for the growing-sieve argument}
\label{app:sieve}

This appendix explains the high-level conditions used in
Theorem~\ref{thm:growing}. It does not verify them for a particular spline
basis. Such a verification requires explicit assumptions on smoothness,
basis growth, and the covariate and risk-set distributions.

\subsection{Uniform convergence of the criterion}

Let $\Theta_n$ be an identifiable sieve parameter space for
$\vartheta=(\beta^\top,\theta_T^\top,\theta_Y^\top)^\top$ and define the
profiled criteria
\[
Q_n^{\mathrm{pr}}(\vartheta;\lambda)
=\inf_{\alpha\in\R^p}
Q_n\bigl(\alpha,\beta,
 b_{K_n}(\cdot)^\top\theta_Y,
 b_{K_n}(\cdot)^\top\theta_T;\lambda\bigr),
\]
with $Q^{\mathrm{pr}}$ defined analogously from the population criterion
\eqref{eq:popQ}. A consistency argument requires a uniform law of large
numbers on a suitably restricted parameter space, of the form
\[
\sup_{\vartheta\in\Theta_n}
\left|Q_n^{\mathrm{pr}}(\vartheta;\lambda)
      -Q^{\mathrm{pr}}(\vartheta;\lambda)\right|=o_p(1),
\]
together with appropriate separation of the population minimizer.
This display is a condition to be established, not a consequence of
sieve approximation alone.

For the Cox component, write
\[
\eta_\vartheta(Z,X)
=Z^\top\beta+b_{K_n}(X)^\top\theta_T,
\qquad
W_n=(Z^\top,b_{K_n}(X)^\top)^\top.
\]
The risk-set moments to be controlled are
\[
S_n^{(r)}(t;\vartheta)
=\Pn\{R(t)e^{\eta_\vartheta(Z,X)}W_n^{\otimes r}\},
\qquad r=0,1,2.
\]
For the penalty, the difference between the empirical and population
quadratic forms is
\[
\frac{\lambda}{2}
(\theta_Y-\theta_T)^\top
\{G_n-G^{(n)}\}(\theta_Y-\theta_T).
\]
Its control requires bounds on both the Gram-matrix difference and the
coefficient vectors over the chosen parameter space.

\subsection{Consistency}

Uniform convergence and identification yield consistency for the
relevant population minimizer. Under equality, the unrestricted
population target is the unpenalized truth by
Proposition~\ref{prop:exacttarget}; sieve approximation must also be
accounted for. Under local alternatives, consistency is to the common
limiting model. The condition
$\theta_{Y0,n}-\theta_{T0,n}=n^{-1/2}h$ does not by itself imply that
each nuisance coefficient is within $O(n^{-1/2})$ of its common limit.
The first-order target shift instead follows from the projected expansion
and the assumptions of Theorem~\ref{thm:growing}. Under a fixed
difference and fixed positive penalty, the population target is generally
the penalized minimizer rather than the unpenalized truth.

\subsection{Rate of the nuisance estimators}

A rate argument must control both sieve approximation and stochastic
error in the joint criterion. Assumption~\ref{ass:nuisancerate} requires
that these errors give an $o_p(n^{-1/2})$ remainder in the profiled target
equation. When the estimation remainder is quadratic, the condition
$r_n=o(n^{-1/4})$ controls that part of the error. A separate bound is
needed for the approximation bias in the target equation. Explicit
restrictions on $K_n$ depend on the basis, smoothness, and the
empirical-process bounds used for the Cox risk sets.

\subsection{Quadratic expansion}

For a reference sieve parameter $\vartheta_{0,n}$ and an admissible
increment $\delta$, a local Taylor expansion of the profiled criterion
has the form
\begin{align*}
&Q_n^{\mathrm{pr}}(\vartheta_{0,n}+\delta;\lambda)
 -Q_n^{\mathrm{pr}}(\vartheta_{0,n};\lambda)\\
&\quad=
\nabla Q_n^{\mathrm{pr}}(\vartheta_{0,n};\lambda)^\top\delta
+\frac12\delta^\top
 \nabla^2 Q_n^{\mathrm{pr}}(\vartheta_{0,n};\lambda)\delta
+R_n(\delta).
\end{align*}
For a correctly specified fixed sieve under equality, the linear term
uses the negative of the score vector defined in
Section~\ref{sec:firstorder}, as shown in the proof of
Theorem~\ref{thm:rootfixed}. Under local differences, the penalty gradient
must also be retained.

The condition needed for Theorem~\ref{thm:growing} concerns the remainder
after profiling the target equation. It is not a blanket assertion that
$\sup_{\|\delta\|\le c r_n}|R_n(\delta)|=o_p(n^{-1})$ over the entire
nuisance neighborhood; such a bound can require stronger rates.
The penalty is exactly quadratic in the sieve coefficients. Its
empirical Gram matrix is $G_n$, and its population counterpart at
dimension $K_n$ is $G^{(n)}$.

\subsection{Profiling the nuisance coordinates}

Partition the estimating equations into the target coordinate $\beta$
and the nuisance coordinates. Solving the nuisance linearized equations
and substituting into the target equation gives the profile score and
its remainder. The relevant derivative and covariance blocks are those
in Sections~\ref{sec:firstorder}--\ref{sec:local} at dimension $K_n$.
Assumptions~\ref{ass:nuisancerate}--\ref{ass:uniformblocks} require that
the resulting approximation error be negligible in the root-$n$ target
expansion.

\subsection{Passage to the limit}

Once the projected score satisfies the required central limit theorem
and $V_{\lambda,n}$ and $V_{0,n}$ converge, the finite-sieve inequality
passes to the limit. Indeed, for each $a\in\R^p$,
\[
a^\top V_{\lambda,n}a\le a^\top V_{0,n}a
\quad\Longrightarrow\quad
a^\top V_\lambda a\le a^\top V_0a.
\]
For local alternatives, the additional deterministic limit is
$\lim_n B_{\lambda,n}(h_n)$, as assumed in
Theorem~\ref{thm:growing}.

\section{Allowing an outcome-specific scale parameter}
\label{app:rho}

A proportional extension replaces the discrepancy penalty by
\[
\frac\lambda2\norm{f_Y-\rho f_T}_n^2.
\]
Here $\rho$ is a proportionality parameter. Rescaling $Y$ by a nonzero
constant $c$ rescales this squared discrepancy by $c^2$ when $f_Y$ and
$\rho$ are rescaled accordingly; preserving the numerical criterion
therefore also requires replacing $\lambda$ by $\lambda/c^2$ and
rescaling the auxiliary residual variance. Proportionality alone is
not invariance at a fixed penalty level.

If $f_{Y0}=\rho_0f_{T0}$, the first derivative of the penalty vanishes
at the true functions and $\rho=\rho_0$. If $\rho_0$ is treated as fixed, the nuisance penalty Hessian for $(\theta_T,\theta_Y)$ becomes
\[
\lambda
\begin{pmatrix}
\rho_0^2G&-\rho_0G\\
-\rho_0G&G
\end{pmatrix},
\]
and the elimination argument can be repeated with modified blocks. If $\rho$ is estimated, an additional row and column must be included in the derivative and covariance matrices. Identification of $\rho$ additionally requires a nonzero survival
covariate function; if $f_{T0}=0$, the equality restriction alone does
not identify $\rho$. Regularity of this expanded system also requires
separate analysis. For that reason, the present paper uses a prespecified outcome standardization in the main theory and treats unknown proportional scaling as an extension.

\bibliographystyle{plainnat}
\bibliography{references}

\end{document}